\newcommand{\altura}{.45cm}
\tikzset{circle/.pic={
\node[circle, aspect=1, draw, minimum size=0.3cm, text width=0.2cm] () at (0,0) {\tikzpictext};
}} 
\pgfplotsset{compat=1.10}
\begin{document}

\title{Wealth Inequality and the Price of Anarchy}
\author{Kurtulu\c{s} Gemici\inst{1} \and Elias Koutsoupias\inst{2} \and Barnab\'e Monnot\inst{3} \and \\\vspace{-10px} Christos Papadimitriou\inst{4} \and Georgios Piliouras\inst{5}}
\institute{Department of Sociology, National University of Singapore \email{kgemici@nus.edu.sg} \and Department of Computer Science, University of Oxford \email{elias@cs.ox.ac.uk} \and Engineering Systems \& Design, Singapore University of Technology and Design \email{monnot\_barnabe@mymail.sutd.edu.sg} \and Department of Computer Science, Columbia University, USA \email{christos@cs.columbia.edu} \and Engineering Systems \& Design, Singapore University of Technology and Design \email{georgios@sutd.edu.sg}}

\maketitle

\begin{abstract}
Price of anarchy quantifies the degradation of social welfare in games due to the lack of a centralized authority that can enforce the optimal outcome. At its antipodes, mechanism design studies how to ameliorate these effects by incentivizing socially desirable behavior (e.g., via tolls or taxes) and implementing the optimal state as equilibrium. In practice, the responsiveness to such measures is not uniform across the population but instead depends on the wealth of each individual. This leads to a natural, but largely unexplored, question. Does optimal mechanism design entrench, or maybe even exacerbate, social inequality?

We study this question in nonatomic congestion games, arguably one of the most thoroughly studied settings from the perspectives of price of anarchy as well as mechanism design. We introduce a new model that incorporates the wealth distribution of the population and captures the income elasticity of travel time (i.e., how does loss of time translate to lost income). This allows us to argue about the equality of wealth distribution both before and after employing a mechanism. We start our analysis by establishing a broad qualitative result, showing that tolls always increase inequality in symmetric congestion games under any reasonable metric of inequality, e.g., the Gini index. Next, we introduce the iniquity index, a novel measure for quantifying the magnitude of these forces towards a more unbalanced wealth distribution and show it has good normative properties (robustness to scaling of income, no-regret learning). We analyze iniquity both in theoretical settings (Pigou's network under various wealth distributions) as well as experimental ones (based on a large scale field experiment in Singapore). Finally, we provide an algorithm for computing optimal tolls for any point of the trade-off of relative importance of efficiency and equality. We conclude with a discussion of our findings in the context of theories of justice as developed in contemporary social sciences and present several directions for future research.
\end{abstract}

\section{Introduction}
\label{sec:intro}
Inequality in wealth and income have been rampant worldwide in the past four decades
\citep{piketty2014capital,stiglitz2012price}, considered by many the scourge of modern societies. Economic analysis,
on the other hand, traditionally focuses on efficiency, that is to say, Pareto optimality of the
allocation.  Whether, and to what extent, efficiency and equality are at loggerheads has been debated in economics,
and the verdict appears to depend on context and assumptions.

Modern societies also give rise to a plethora of strategic scenaria,
in which the behavior of one agent affects the others, and the
outcome of which ultimately affects the agents' overall well-being.  In game theory, we
study the efficiency of these strategic situations through the so-called {\em
price of anarchy,} the relative efficiency of the game's Nash equilibria over the
social optimum \citep{KoutsoupiasP99WorstCE}. To combat the price of anarchy, the introduction of {\em tolls}
which enforce the optimal outcome as equilibrium has been proposed, see \citep{cole2003pricing,fleischer2004tolls} among an extensive literature.
However, the effect that such mechanisms may have on
the level of inequality in the society does not appear to have been addressed in the literature.

{\em The present paper is a first attempt to articulate and study this issue.}
We consider games (here only congestion games) in which the agents'
utility and behavior depend explicitly on their income or wealth, and study
the effect the game's equilibria have on inequality.

\paragraph{Example: Transportation in Singapore,} seen as a congestion game
with tolls, has a price of anarchy that is close to one \citep{monnot2017routing}.
The main arteries are almost never clogged, and public transportation is accessible and runs
smoothly. This is the result of bold policy decisions:  car ownership in Singapore is
significantly taxed, and  dynamically adaptive tolls are in place.
Interestingly, transportation delays seem to be a decreasing function of income (see Section \ref{sec:data}
on data). Could this be indicative of a more intrinsic tension between efficiency and equality?

We are interested in the ways in which optimal (or more generally efficiency-enhancing) mechanisms affect inequality.
Inequality is measured in many ways, but perhaps most often through
what is known as the {\em Gini coefficient (or Gini index)}.\footnote{Gini is the predominant way of measuring inequality, but there are others -- e.g. Hoover coefficient,
variance-to-mean, etc.}
 Intuitively, the Gini coefficient of a distribution
is {\em twice} the area between the $45^o$ line and the
normalized convex cumulative wealth/income curve (see Figure \ref{fig:ginilc}).  That is, we compute the cumulative
income/wealth $Q(y)$ of the lowest $y$ fraction of the population for all $0\leq y\leq 1$,
we normalize it so that $Q(1)=1$, and then we integrate $y-Q(y)$ from $0$ to $1$.  At total equality the Gini is zero, while
at total inequality (i.e., when the emperor owns everything) it is one.  In 2015, the income Gini in
OECD countries ranged from the .20s (Northern Europe) to the .40s and .50s (USA and East Asia).
Unsurprisingly, a lot of emphasis is put on the trends of the Gini itself.\footnote{In Singapore, the Gini coefficient has fallen from .470 in 2006 to .458 in 2016, whereas after accounting for government taxes and transfers, the 2016 figure was even lower at .402 \citep{archive2018}.}  Namely, is inequality increasing, and if so, at what rate?

\def\inequalityeffect{iniquity}
\def\Inequalityeffect{Iniquity}

{\bf Our contributions.}
We study this question in nonatomic congestion games with tolls, where  we introduce a new model that incorporates the wealth distribution of the population and captures the income elasticity of travel time (i.e., how loss of time translates to lost income). This allows us to argue about the equality of wealth distribution both before and after employing a mechanism.
The basics of our modeling are thus:  We consider a continuum of agents, each agent of a type $x>0$
standing for their income.\footnote{Or wealth; we write ``income'' henceforth in this paper, but ``wealth'' would
also be appropriate everywhere.} We assume that the distribution of types is known.
When these agents engage in a game $\Gamma$ and that, at equilibrium, type $x$ receives a cost
$c_x$. This cost is expressed in the same units as income, dollars, say; after incorporating the losses due to time spent in traffic in dollars as well as any possible costs due to tolls/taxes.
As a result, the agent's total wealth becomes $x' = x - \alpha c_x$,
where $\alpha$ is a small constant standing for the importance of the game under consideration
to an individual's well-being.
In Section \ref{sec:inequality} we establish a broad qualitative result, the Iniquity Theorem (Theorem \ref{thm:iniquity}), showing that tolls always increase inequality in symmetric congestion games under any reasonable metric of inequality, e.g., Gini coefficient.

In Section \ref{sec:iniquity} we introduce the iniquity index, a novel measure for quantifying the magnitude of these forces towards a more unbalanced wealth distribution. Let $q$ be the initial income distribution of the population
of agents under consideration, let $G(q)$ be its Gini coefficient, and suppose that $\hat{q}$ is the
distribution of the income {\em after} each $x$ becomes $x-\alpha\cdot c_x$
(that is, after the game has been played).
We are interested in the way the game affects the Gini coefficient; we express this, informally,
as the coefficient of $\alpha$ in $G(\hat{q})-G(q)$, ignoring terms that are $o(\alpha)$; in other words,
we are interested in the {\em derivative} of $G(q)$ with respect to $\alpha$.  We call this quantity
the {\em {\inequalityeffect}}\ of the game. We show that from a theoretical perspective it has attractive properties.
Specifically it is robust to scaling of income (Theorem \ref{thm:scaleinv}) and it remains unaffected if instead of immediate equilibration we assume
that all agents apply regret-minimizing algorithms (Theorem \ref{thm:noregret}).


 We analyze iniquity both in theoretical settings (Section \ref{sec:pigou})  as well as experimental ones (Section \ref{sec:data}).
Specifically, these effects become apparent already in the well-trodden
Pigou's network \citep{pigou2013economics}. This network has two parallel links, one with constant delay function 1,
 and another with delay function $x$
(that is, a delay proportional to the percentage of agents that take this option). This game {\em without tolls}
has a unique Nash equilibrium in which all flow goes through the $x$ link, while the optimum would be
to split equally between the two. The price of anarchy is $4\over 3$, and the \inequalityeffect\  turns out to be zero.
It is well-known that the price of anarchy, {\em in the case of equal incomes,}
can be rendered to one by adding tolls, and
it is not hard to see that the same can be done for any income distribution \citep{cole2003pricing}
--- {\em but then the \inequalityeffect\  becomes substantial.}
If tolls decrease, we have a full-fledged trade-off between \inequalityeffect\  and price of anarchy.
In Theorem \ref{thm:pigou} we calculate the precise price of anarchy to \inequalityeffect\
trade-off of any variant of Pigou's network with income distributions of the form $y^\beta$.

In Section \ref{sec:data} we perform detailed data analytics on a semantically rich dataset capturing the routing
behavior of tens of thousands of Singaporean students. This dataset captures the movement of each individual at a high frequency (one new datapoint per individual every 13 seconds) and allows us to distinguish between different modes of transportation (walking, bus, train, car). We can pinpoint each individual's home location which allows us to
 compute estimates about the wealth distribution of the participants. Given the level of data granularity, we can control for different parameters (e.g., distance between source and sink destinations for different sub-populations) and identify a statistically significant increased commute time for the lower-income students, which corroborates our theoretical analysis. Interestingly, the Singapore case also points out some of the successful policies (e.g., polycentric urban development model) that can be implemented to alleviate the trade-off between efficiency and equality. The main part of the paper concludes with a discussion of our findings in the context of theories of justice in contemporary social sciences (Section \ref{distributive-justice}) and presents several directions for future research.

In Appendix \ref{sec:asymmetric} we present some surprising results about iniquity in (adversarially chosen) asymmetric settings, which prompt several open questions. In Appendix \ref{sec:tradeoff}  we provide an algorithm for computing optimum tolls for any point of the trade-off of relative importance of efficiency and equality for symmetric networks on parallel links and arbitrary delay functions. 

\section{Related Work}

Given the proliferation of the usage of algorithms in all aspects of our lives (from suggesting Airbnb hosts to identifying convicts eligible for early parole), the theoretical computer science community has recently
focused with zeal on understanding
issues of  fairness, equality and justice. Interestingly, it seems that despite this fervor of activity,
the questions explored in this paper have so far received little attention.
Despite this distance in subject matter, it is useful to examine this rapidly forming research landscape.
 We provide some key references below:

{\bf Algorithmic Fairness, Fairness in AI.} This line of research tries to adapt standard learning and decision-making settings so as the new algorithms satisfy some fairness desiderata.
\citet{Kannan:2017:FIM:3033274.3085154} investigate whether it is possible to design  payment schemes for a principal to motivate myopic agents to play fairly in  almost all rounds. Their notion of fairness asks that more qualified individuals are never  preferred over less qualified ones.
\citet{JabbariJKMR17} study fairness for reinforcing learning algorithms over Markovian environments, where their fairness constraint requires that an algorithm never prefers one action over another if the long-term (discounted) reward of choosing the latter action is higher.  \citet{pleiss2017fairness} study the tension between minimizing error disparity across different population groups while maintaining calibrated probability estimates.

{\bf Cake Cutting, Fair Division.} This line of research asks how to divide some goods (or obligations e.g. rent) according to different fairness principles. This is a classic problem \citep{steinhaus1948problem}, a good reference point for which is
the book by \citet{brandt2016handbook}. Recently, many such fair division algorithms have been made freely available online on a nonprofit website \citep{goldman2015spliddit}. The availability of such data has fed back into theoretical research leading to new methods for the classic rent division problem \citep{gal2017fairest}.

{\bf Price of Fairness.} \citet{bertsimas2011price,Bertsimas:2012:ET:2421296.2421303} have coined the term price of fairness to quantify the efficiency loss in optimization settings where the set of permissible outcomes must meet some equality or fairness constraints. This line of research stays close to the philosophy and techniques of the price of anarchy literature. In contrast,  our work although inspired by price of anarchy moves in a different direction where new techniques and ideas are required.

{\bf Smart Cities: Singapore.} Singapore takes a proactive approach towards using data analytics to understand and improve both the efficiency as well as the fairness of its practices. \citet{benabbou2017diversity} has used data to develop and test new techniques that promote diversity in the allocation of public housing (e.g. every ethnic group must not own more than a certain percentage in a housing project). \citet{monnot2017routing} used data analytics to compute bounds on the price of anarchy as well as other metrics of efficiency of the traffic in Singapore (regret, equilibration).

Finally, \citet{chen2017balancing,chen2017mechanism} have recently used the Gini-coefficient over the probabilities of the agent winning probabilities as an inequality measure of different mechanisms and design mechanisms with such good properties. Although syntactically similar, these works do not model wealth distributions nor do they examine the differential effects of mechanisms to equality, which is our focus. Our work shows, perhaps unsurprisingly, that the use of public transportation plays a critical, but not well-understood, role in the functioning of a traffic network.  \citet{fotakis2017selfish} recently introduced a model of congestion games with buses, and hopefully more research will follow along these lines.

\section{Model Description}
\label{sec:prelim}
\newcommand{\ii}{q}
\newcommand{\epii}{\hat{q}}
\newcommand{\II}{Q}
\newcommand{\cost}{\text{cost}}
\newcommand{\Gini}{G}
\newcommand{\latency}[1]{\ell_{#1}}
\newcommand{\flow}[1]{c(#1)}
\newcommand{\s}{c_u} 
\newcommand{\toll}{\tau}
\newcommand{\iniquity}{I}
\def\cost{{\rm cost}}

We  describe a game-theoretic model where a continuum of agents participates in a traffic congestion game with tolls.  The total disutility for each agent depends both on their traffic-induced latency as well as on the tolls, whose effects are experienced differentially based on each agent's income level.

\paragraph{Congestion game}
A {\em symmetric congestion game with type-specific costs} consists of a finite set $E$ of edges,
and a finite subset of $2^E$ called the set of \textit{paths} $\mathcal{P}$, common to all types.
We shall only deal with {\em network} congestion games, where the set of paths
consists of all possible paths between two nodes $s$ and $t$ in a graph with edge set $E$.

\paragraph{Income}
We have a continuum of {\em types} which lie in $[0,1]$.
Type $x$ has income $\ii(x)$, where $\ii$ is the {\em quantile function} of the income
of a population of agents --- that is, \( |z: \ii(z) \leq \ii(x)| = x \), where \( |\cdot| \) is the Lebesgue measure.
We shall further assume that $\ii(0)>0$ and $\ii$ is measurable and nondecreasing. Typically, we will assume a continuum of types and a strictly increasing, continuous $q$.
 In this case, if we treat income as random variable, then $q$ expresses the inverse of its cumulative distribution function.

\paragraph{Edge cost functions}
Our main result, the Iniquity Theorem, holds under general conditions on the cost functions. Assumptions on their shape and properties are stated here while further specific examples will be given later in this Section. Each agent \( x \) using edge \( e \) experiences a cost \( f_e(\ii, z, \toll_e) \), where \( \ii \) is the agent's income, \( z \) is the level of congestion on edge \( e \) and \( \toll_e \) is the fixed toll paid by the agent.

The following assumptions on the edge cost functions~\eqref{eq:edgecf} are made throughout:
\begin{align}
    \text{The functions } (f_e)_{e \in E} \text{ are } &- \text{nonnegative} \nonumber \\
      &- \text{nonincreasing in income} \tag{ECF}\label{eq:edgecf}\\
      &- \text{nondecreasing in congestion and toll} \nonumber
\end{align}
In other words, for a fixed toll and congestion level, agents with higher income perceive a lower cost than agents with lower income. 

\paragraph{Flow}
A {\em flow} $F: [0,1]\mapsto \mathcal{P}$ is a mapping from types to paths; we shall only need to consider
{\em finitary} flows, that is, flows $F$ which divide $[0,1]$ into finitely many intervals, and map the interiors of those
intervals to one path in $\mathcal{P}$; that is, $F$ is specified by a finite number of reals $a_0=0<a_1<a_2<\cdots<a_k=1$
such that $F(b)=F(c)$ for all $i$ and $b,c\in (a_i,a_{i+1}]$

\paragraph{Agent cost}
Let $F$ be a flow.  The {\em congestion} of this flow, $c^F$, is a function mapping $E$ to the nonnegative reals,
where $c^F(e)= |\{x: e\in F(x)\}|$, where $|\cdot|$ denotes the Lebesgue measure.
The {\em agent cost} under flow $F$ to an agent of type $x$ is some function \( C \) of her income and the path cost:
\[
  \cost^F(x) = C \Big( \ii(x), \sum_{e\in F(x)} f_e(\ii(x), c^F(e), \toll_e) \Big) \, .
\]

The model allows for some degree of flexibility when designing the overall cost of the agents. We prove our result in Section \ref{sec:inequality} for two specific forms of \( \cost^F(x) \):
\begin{align}
  & \cost^F(x) = \sum_{e\in F(x)} f_e(\ii(x), c^F(e), \toll_e) \tag{CF1} \label{eq:cf1}\\
  & \cost^F(x) = \ii(x) \cdot \sum_{e\in F(x)} f_e(\ii(x), c^F(e), \toll_e) \tag{CF2} \label{eq:cf2}
\end{align}
Note that the second form is identical to the first with the exception that costs are scaled by the income of the agent.

Within the class \( f_e \) of functions satisfying~\eqref{eq:edgecf}, a natural special case is that of \( f_e(\ii, z, \toll_e) = \frac{\toll_e}{\ii} + \latency{e}(z) \)
where \( \latency{e} \) is a nonnegative and nondecreasing latency function. Under this form, the cost of agent \( x \) in edge \( e \) with~\eqref{eq:cf2} is
\begin{equation}
  \ii(x) \cdot f_e(\ii(x), c^F(e), \toll_e) =
    \ii(x) \cdot \latency{e}(c^F(e)) + \toll_e \tag{CAN} \label{eq:canonical} \, .
\end{equation}
We call this cost \textit{canonical} as it is given in units of the toll, i.e. money.

There is an extensive discussion in the transportation literature of the true cost of transportation to the traveler and the value of time, see \citep{abrantes2011meta, borjesson2012income} for some of the most recent papers, with dozens of references therein. This field has established and studied the {\em income elasticity} of the value of (travel) time (informally, the precise nature of the formula $\toll_e \over q$ above) and validated and measured it through extensive surveys and other studies over three decades.  The upshot is that the cross-sectional elasticity (that is, the elasticity with regressive corrections across causal parameters such as having children and living in the capital) is constant across long periods of time, and that the precise relationship seems to be  $\toll_e \over \ii^\beta$ where $\beta\leq 1$ is conventionally taken to be one, even though certain countries, such as the UK, use value 0.8.

\paragraph{Nash equilibrium}
We say that a flow $F$ is a {\em Nash equilibrium} in our model if
\begin{equation}
  \tag{NE}
  \label{eq:nasheq}
  \text{For all types } x,   \text{and for all path } P \in \mathcal{P}, \, \cost^F(x) \leq C(\ii(x), \sum_{e\in P} f_e(\ii(x), c^F(e), \toll_e))
\end{equation}
that is, if no type $x$ would be better off by deviating to another path $P \in \mathcal{P}$.

\paragraph{Gini coefficient}
The Gini coefficient is the most commonly used measure of inequality.  A Gini coefficient equal to zero corresponds to perfect equality (everyone has the same income), whereas a Gini coefficient of one corresponds to maximal inequality (e.g., one person has all the income). The Gini coefficient has several desirable properties such as:
\begin{itemize}
\item \textbf{Scale independence.} The Gini coefficient does not change after rescaling incomes (e.g. change of units/currency).
\item \textbf{Population independence.} It does not depend on the size of the population.
\item \textbf{Anonymity.} It does not depend on the identity of the rich/poor individuals.
\item \textbf{Transfer principle.} If income (less than the difference\footnote{If the income transfer is less than the difference of their incomes, the ordering of the wealth of the users does not change.}), is transferred from a rich person to a poor person the resulting distribution is more equal (i.e. the Gini decreases).
\end{itemize}

There exists a strong relationship between the Gini coefficient and the \textit{Lorenz curve}. The {\em cumulative income} function is $ \II(t)=\int_0^t \ii(x) dx $. The Lorenz curve \( L(t) \) is then the fraction of total income held by individuals under and at quantile \( x \):
\begin{equation}
  \tag{LC}
  \label{eq:lorenz}
  L(t) = \frac{1}{\mu} \int_0^t \ii(x) dy = \frac{1}{\mu} \II(t)
\end{equation}
In that case, the Gini coefficient \( G \) is equal to \( 1 - 2 \int_0^1 L(t) dt \). We show the geometric intuition of this fact in Figure \ref{fig:ginilc}.


\begin{figure}
  \centering
  \begin{minipage}[t]{0.42\linewidth}
		\centering
    \begin{tikzpicture}[thick,scale=0.8, every node/.style={transform shape}]
      \definecolor{myred}{rgb} {0.941,0.3294,0.1921}
      \definecolor{myblue}{rgb} {0.2705,0.6627,0.8705}
      \definecolor{mygreen}{rgb} {0.7490,0.8078,0.5019}
      \tikzstyle{nome}=[anchor=west, minimum height=\altura,minimum width=9cm,text width=8.8cm]
      \draw[thick,->] (0,0) -- (0,4.5);
      \draw[thick,->] (0,0) -- (5,0);
      \node at (-0.5,4.5) {\( L(x) \)};
      \node at (5,-0.5) {Quantile \( x \)};
      \draw[thick] (0,0) -- (4,4);
      \draw (4,4) -- (4,0);
      \path[fill=mygreen, opacity=0.5] (0,0) parabola (4,4) -- (4,0) -- (0,0);
      \path[fill=myred, opacity=0.5] (0,0) parabola (4,4) -- (0,0);
      \draw[line width=0.5mm, myblue] (0,0) parabola (4,4);
      \node at (1.4,1) {\( A \)};
      \node at (3,1) {\( B \)};
    \end{tikzpicture}
    \caption{The Lorenz curve is plotted in blue. The green area is \( B = \int_0^1 L(t) dt \). The Gini coefficient is then \( G = 1 - 2B = 2A \).}
    \label{fig:ginilc}
  \end{minipage}
  \begin{minipage}[t]{0.1\linewidth}
    \begin{tikzpicture}
    \end{tikzpicture}
  \end{minipage}
	\begin{minipage}[t]{0.42\linewidth}
		\centering
    \begin{tikzpicture}
      \node[state] (1) at (0, 2) {S};
      \node[state] (2) at (4, 2) {T};
      \node (3) at (0, 0) {};
      \draw[->] (1) edge [bend left] node [above] {$\latency{u}(x)=1$} (2);
      \draw[->] (1) edge [bend right] node [below] {$\latency{d}(x)=x$} (2);
    \end{tikzpicture}
    \caption{The Pigou network}
    \label{fig:pigou}
  \end{minipage}
\end{figure}

\paragraph{Our motivating problem}
We consider how Nash equilibrium flow $F$ affects the incomes of the population. In particular, we assume that
the income of type $x$ changes from $\ii(x)$ to
$ \ii(x)-\alpha \cdot \cost^F(x)$ for some (intuitively small) $\alpha>0$.
We call the resulting income distribution $\epii(x)$. Notice that, in general, $\epii(x)$ may be different from
$\ii(x)-\alpha \cdot \cost^F(x)$, since the cost of $F$ may rearrange the order of types (recall that distributions such as $\ii(x)$ are assumed to be nondecreasing). As we shall see in the Iniquity Theorem proof of Section \ref{sec:inequality}, this turns out to never be the case and moreover the inequality increases as a result.



\section{Tolls Always Increase Inequality}
\label{sec:inequality}
Tolls can be used in congestion games so as to induce socially optimal flows (from the perspective of total cost) as Nash equilibrium \citep{cole2003pricing,fleischer2004tolls}. We next prove a general theorem showing that tolls always exacerbate societal inequality. So, in a sense to achieve optimality from the perspective of social welfare we have to pay a hidden cost in terms of fairness.

\begin{theorem}
    \label{thm:iniquity}
    {\bf The \Inequalityeffect\ Theorem:} In any Nash equilibrium of any symmetric congestion game with type-specific costs (for both \ref{eq:cf1} and \ref{eq:cf2} models of cost functions), any set of positive edge tolls $\toll_e$ increases the inequality of the population. More specifically, the Gini coefficient of the \textit{ex ante} income distribution \( \ii \) is lower than the Gini coefficient of the \textit{ex post} income distribution \( (\ii(x) - \alpha \cdot \cost^F(x))_{x \in [0, 1]} \).
\end{theorem}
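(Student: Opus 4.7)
My strategy is to show that, at any Nash equilibrium, the income lost by each agent as a fraction of her income is a nonincreasing function of her income rank; once this is established, the Gini comparison follows from a Chebyshev-style rearrangement identity on Lorenz curves. The key monotonicity lemma will be that $r(x) := \cost^F(x)/\ii(x)$ is nonincreasing on $[0,1]$. Fix two types $x_1 < x_2$ with paths $P_i = F(x_i)$, and write $S_P(y) = \sum_{e \in P} f_e(y, c^F(e), \toll_e)$. In both \eqref{eq:cf1} and \eqref{eq:cf2} the Nash condition \eqref{eq:nasheq} reduces, after dividing out the positive factor $\ii(x)$ in the case of \eqref{eq:cf2}, to $S_{F(x)}(\ii(x)) \leq S_P(\ii(x))$ for every $P \in \mathcal{P}$. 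Applying this with agent $x_2$ deviating to $P_1$ gives $S_{P_2}(\ii(x_2)) \leq S_{P_1}(\ii(x_2))$, and the income-monotonicity clause of \eqref{eq:edgecf} applied to $S_{P_1}$ yields
\[
  S_{P_2}(\ii(x_2)) \;\leq\; S_{P_1}(\ii(x_2)) \;\leq\; S_{P_1}(\ii(x_1)).
\]
Under \eqref{eq:cf2} the extremes equal $r(x_2)$ and $r(x_1)$ and we are done; under \eqref{eq:cf1} the fact that $\ii(x_2) \geq \ii(x_1)$ only strengthens the inequality after dividing by the respective incomes, using the nonnegativity of $S_{P_1}$ from \eqref{eq:edgecf}.

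With $r$ nonincreasing, set $K(x) = \int_0^x \cost^F(y)\,dy$. A short computation shows that the $\ii$-weighted running mean $K(x)/\II(x)$ has derivative proportional to $r(x)\II(x) - K(x)$, which is nonpositive because $K(x) = \int_0^x r(y)\ii(y)\,dy \geq r(x)\II(x)$ by the monotonicity of $r$. Hence $K/\II$ is nonincreasing on $[0,1]$, and evaluating at $x = 1$ gives $K(x)/K(1) \geq \II(x)/\mu$ for every $x$, where $\mu = \II(1)$. Substituting into the ex post Lorenz expression
\[
  \tilde{L}(x) \;=\; \frac{\II(x) - \alpha\, K(x)}{\mu - \alpha\, K(1)}
\]
and clearing denominators gives $\tilde{L}(x) \leq L(x) = \II(x)/\mu$ pointwise on $[0,1]$; integrating this yields $\tilde{G} \geq G$, with strict inequality whenever positive tolls actually separate the population across paths.

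One subtlety is that $x \mapsto \ii(x) - \alpha\, \cost^F(x)$ need not be nondecreasing, so $\tilde{L}$ as written above is not literally the quantile-based Lorenz curve of the ex post distribution. This is harmless: by the Hardy--Littlewood rearrangement inequality, sorting this function nondecreasingly only lowers each cumulative $\int_0^x$, hence only lowers the true Lorenz curve further, and therefore the true ex post Gini is at least as large as the $\tilde{G}$ I computed. The step I expect to require the most care is the monotonicity argument for $r$, because it must be proved uniformly over all admissible $f_e$ satisfying \eqref{eq:edgecf} rather than only the canonical form \eqref{eq:canonical}; once that is in place, the Lorenz-curve manipulation is essentially standard measure-theoretic bookkeeping.
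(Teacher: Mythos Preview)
Your argument is correct and shares its backbone with the paper's proof: both hinge on showing, via the Nash condition plus the income-monotonicity clause of \eqref{eq:edgecf}, that the relative income loss $r(x)=\cost^F(x)/\ii(x)$ is nonincreasing in $x$ (the paper packages this as its Lemma~3, showing the path-cost sum $\sum_e f_e$ is nonincreasing, which is exactly your inequality $S_{P_2}(\ii(x_2))\le S_{P_1}(\ii(x_1))$).

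Where you diverge is in the Gini comparison step. The paper first rescales $\epii$ to have the same mean as $\ii$ and then invokes a single-crossing lemma (their Lemmas~1--2) to conclude $G(\ii)\le G(\epii)$. You instead prove pointwise Lorenz dominance directly, via the monotonicity of $K/\II$, which is a cleaner Chebyshev-type argument and in fact delivers a stronger conclusion (Lorenz dominance rather than merely a Gini inequality, hence the result extends automatically to any Schur-convex inequality measure). One minor remark: your Hardy--Littlewood caveat, while harmless, is unnecessary. Under \eqref{eq:cf1} the sum $S_{F(x)}(\ii(x))$ itself is nonincreasing, so $\ii-\alpha\,\cost^F$ is nondecreasing; under \eqref{eq:cf2} the ex post income factors as $\ii(x)\bigl(1-\alpha\, S_{F(x)}(\ii(x))\bigr)$, a product of two nonnegative nondecreasing functions. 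In either model the income order is preserved, as the paper asserts.
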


The proof is done in three steps. First, we show that if two income distributions with equal means cross at one point, one has a higher Gini coefficient than the other. This is equivalent to the transfer principle, or Pigou-Dalton principle of income inequality measures. Second, we show that when a distribution is obtained by decreasing proportionally less the higher incomes than the lower incomes --- in other words, a regressive tax --- then the resulting distribution has a higher Gini coefficient than the original one, i.e., is more unequal. Third, we show that under equilibrium in the game, players with higher incomes have a lower cost than players with lower incomes. Finally, Theorem \ref{thm:iniquity} is obtained as a corollary of the three lemmas.

\newcommand{\iid}{q}
\newcommand{\iidb}{\epii}
\begin{lemma}
	\label{lem:lorenzdec}
	Suppose \( \iid 
	 \) and \( \iidb 
	  \) are two income distributions (represented by their quantile functions) 
	of equal means, i.e.
	\( \mu = \int_0^1 \iid(x)dx = \int_0^1 \iidb(x)dx = \hat{\mu} \). If there exists \(x^*\) such that \(  \iidb(x) \leq \iid(x), \forall x \leq x^* \), and \( \iidb(x) > \iid(x) \) otherwise,
	then 
	 \( G( \iid) \leq G( \iidb) \).
\end{lemma}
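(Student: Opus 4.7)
The plan is to establish pointwise Lorenz dominance of $q$ over $\hat q$, and then invoke the representation of the Gini coefficient as $G = 1 - 2\int_0^1 L(t)\,dt$ given in the excerpt. Concretely, I would show that the Lorenz curves $L_q(t) = Q(t)/\mu$ and $L_{\hat q}(t) = \hat Q(t)/\mu$ satisfy $L_q(t) \geq L_{\hat q}(t)$ for every $t \in [0,1]$; once this is in hand, integrating and multiplying by $-2$ immediately yields $G(q) \leq G(\hat q)$.

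To get Lorenz dominance I would analyze the difference
\[
  L_q(t) - L_{\hat q}(t) \;=\; \frac{1}{\mu} \int_0^t \bigl(q(x) - \hat q(x)\bigr)\,dx
\]
by splitting into two cases at the crossing point $x^{*}$. For $t \leq x^{*}$, the single-crossing hypothesis gives $q(x) - \hat q(x) \geq 0$ on $[0,t]$, so the integrand is nonnegative and the difference is $\geq 0$. For $t \geq x^{*}$, I would use the equal-means assumption $\int_0^1 (q - \hat q)\,dx = 0$ to rewrite
\[
  \int_0^t (q - \hat q)\,dx \;=\; -\int_t^1 (q - \hat q)\,dx,
\]
and then observe that on $[t,1] \subseteq [x^{*},1]$ the integrand $q - \hat q$ is nonpositive, so the right-hand side is again $\geq 0$. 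This gives $L_q \geq L_{\hat q}$ pointwise.

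Finishing up, I would invoke $L_q(0)=L_{\hat q}(0)=0$ and $L_q(1)=L_{\hat q}(1)=1$ to note that the inequality is automatic at the endpoints (consistent with a single crossing), and conclude
\[
  G(q) \;=\; 1 - 2\int_0^1 L_q(t)\,dt \;\leq\; 1 - 2\int_0^1 L_{\hat q}(t)\,dt \;=\; G(\hat q).
\]
The only real subtlety is the case $t \geq x^{*}$, where a naive pointwise comparison of integrands fails; the equal-means condition is exactly what converts the positive-on-the-left/negative-on-the-right balance into a nonnegative cumulative difference for every $t$. Everything else is bookkeeping on the definitions of $Q$, $L$, and $G$ already set up in the Model Description.
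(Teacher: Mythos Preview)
Your proposal is correct and follows essentially the same approach as the paper: establish pointwise Lorenz dominance $L_q \geq L_{\hat q}$ by splitting at the crossing point $x^*$, using the sign of $q-\hat q$ directly on $[0,x^*]$ and the equal-means identity on $[x^*,1]$, then integrate to compare Gini coefficients. Your handling of the $t \geq x^*$ case via $\int_0^t = -\int_t^1$ is a slightly cleaner rewriting of the paper's equivalent manipulation, but the argument is the same.
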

\begin{proof}
        We will show that \( L(x) \geq \hat{L}(x),\, \forall x \) where \( L(x) \) is the Lorenz curve of \(  \iid \) defined in~\eqref{eq:lorenz} and thus \( G( \iid) \leq G( \iidb) \).
	First note that for \( x \leq x^* \),
	\[
		\hat{L}(x) \leq L(x)
	\]
	since \( \iid(x) \geq \iidb(x) \).

	For \( x \geq x^* \), we have
	\begin{align*}
		\hat{L}(x) \leq L(x) & \iff & \int_0^{x^*} \iidb(y) dy + \int_{x^*}^x \iidb(y) dy \leq \int_0^{x^*} \iid(y) dy + \int_{x^*}^x \iid(y) dy \\
		& \iff & \int_0^{x^*} [\iid(y)-\iidb(y)] dy \geq \int_{x^*}^x [\iidb(y)-\iid(y)] dy
	\end{align*}
	
	The last inequality is true since by \( \mu = \hat{\mu} \) we get
	\[
		\int_0^{x^*} [\iid(y)-\iidb(y)] dy = \int_{x^*}^{1} [\iidb(y) - \iid(y)] dy
	\]
\end{proof}

\begin{lemma}
	\label{lem:giniscale}
	Suppose two income distributions  (represented by their quantile functions)  \( \iid
	\) and \( \iidb 
	 \)  are such that \( \iidb(x) = \beta(x) \cdot \iid(x) \) and \( 1 \geq \beta(y) \geq \beta(z)>0 \) for \( y \geq z \),\footnote{I.e., \( \iidb \) is obtained from
  \( \iid \) by a transformation that reduces lower incomes relatively more than higher incomes. Income \textit{order is preserved} and \( \hat{\mu} \leq \mu \).} then 
   \( G(\iid) \leq G(\iidb) \).
\end{lemma}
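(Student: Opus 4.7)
My plan is to reduce to Lemma \ref{lem:lorenzdec} by rescaling $\iidb$ so that the two distributions share the same mean. Let $\mu = \int_0^1 \iid(x)\, dx$ and $\hat{\mu} = \int_0^1 \iidb(x)\, dx$, both positive because $\iid(0) > 0$. Define
\[
  \tilde{\iid}(x) \,:=\, \tfrac{\mu}{\hat{\mu}}\, \iidb(x) \;=\; \tilde{\beta}(x)\cdot \iid(x), \qquad \tilde{\beta}(x) \,:=\, \tfrac{\mu}{\hat{\mu}}\, \beta(x).
\]
By scale invariance of the Gini coefficient, $G(\iidb) = G(\tilde{\iid})$. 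Since $\tilde{\beta}$ is nondecreasing nonnegative and $\iid$ is nondecreasing nonnegative, $\tilde{\iid}$ is a valid (nondecreasing) quantile function, and by construction $\int_0^1 \tilde{\iid}(x)\, dx = \mu = \int_0^1 \iid(x)\, dx$.

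The next step is to show that $\iid$ and $\tilde{\iid}$ satisfy the single-crossing hypothesis of Lemma \ref{lem:lorenzdec}. Rewrite the matching-mean identity as $\int_0^1 \iid(x)\bigl(\tilde{\beta}(x)-1\bigr)\, dx = 0$. Since $\iid > 0$ and $\tilde{\beta} - 1$ is nondecreasing, it cannot be everywhere $\le 0$ or everywhere $\ge 0$ without being identically zero. Hence either $\tilde{\beta} \equiv 1$ (the trivial case in which $\iidb$ is a uniform rescaling of $\iid$, so $G(\iidb) = G(\iid)$), or there exists $x^* \in (0,1)$ with $\tilde{\beta}(x) \le 1$ for $x \le x^*$ and $\tilde{\beta}(x) \ge 1$ for $x \ge x^*$. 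Equivalently, $\tilde{\iid}(x) \le \iid(x)$ on $[0, x^*]$ and $\tilde{\iid}(x) \ge \iid(x)$ on $[x^*, 1]$.

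Applying Lemma \ref{lem:lorenzdec} to the mean-matched pair $(\iid, \tilde{\iid})$ then yields $G(\iid) \le G(\tilde{\iid}) = G(\iidb)$, as required. The only substantive step is the single-crossing observation; the monotonicity of $\beta$ is exactly what produces the required sign pattern, in line with the Pigou-Dalton intuition that a proportionally smaller reduction at higher incomes constitutes a regressive redistribution. The remaining ingredients---scale invariance of $G$, monotonicity of $\tilde{\iid}$, and the degenerate case $\tilde{\beta}\equiv 1$---are bookkeeping.
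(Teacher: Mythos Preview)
Your proof is correct and follows essentially the same route as the paper: rescale $\iidb$ by $\mu/\hat{\mu}$ to match means, use scale invariance of the Gini coefficient, then observe that the rescaled multiplier $\tilde{\beta}$ is nondecreasing and must cross $1$, so Lemma~\ref{lem:lorenzdec} applies. You are slightly more careful than the paper in explicitly handling the degenerate case $\tilde{\beta}\equiv 1$ and in noting that $\tilde{\iid}$ is nondecreasing, but the argument is the same.
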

\begin{proof}
         If we rescale distribution $\iidb$ to 
	 \( \iid_1(x) = \frac{\mu}{\hat{\mu}} \iidb(x) \) the Gini remains invariant. It suffices to show that  \( G(\iid) \leq G(\iid_1) \).
	 Since \( \mu = \mu_1 \),  we can compare the two distributions using Lemma \ref{lem:lorenzdec}.

	Introduce \( \beta_1(x) = \frac{\mu}{\hat{\mu}} \beta(x) \), the transformation of income from \( \iid \) to \( \iid_1 \). 
	\( \beta_1(x) \) is a nondecreasing function of \( x \). If \( \beta_1(x) < 1 \) for all \( x \), then obviously we cannot have \( \mu_0 = \mu_1 \) and the same holds if \( \beta_1(x) > 1 \) for all \( x \). Hence there exists some \( x^* \) such that \( \beta_1(x^*) = 1 \). Moreover,  by the monotonicity properties of $\beta_1$, the income order is  preserved and \( x^* \) satisfies all the properties of Lemma \ref{lem:lorenzdec}. Thus,  \( G(\iid) \leq G(\iid_1) = G(\iidb)\). 
\end{proof}

\begin{lemma}
  \label{lem:decreasingcosts}
  Let \( 0 \leq x \leq y \leq 1 \) and \( F \) be an equilibrium flow.
  Let \( \cost^F(z) = \sum_{e\in F(z)} f_e(\ii(z), c^F(e), \toll_e) \) where \( f_e \) is the cost function on edge \( e \), nonincreasing in its first argument (income) and nondecreasing in the other two (congestion and toll) as per assumptions~\eqref{eq:edgecf},
  then \( \cost^F(x) \geq \cost^F(y) \).
\end{lemma}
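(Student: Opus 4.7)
The plan is a short equilibrium-deviation argument that chains together the Nash inequality with the monotonicity of $f_e$ in income provided by~\eqref{eq:edgecf}.

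First I would set up notation: let $P_x = F(x)$ and $P_y = F(y)$ be the paths that types $x$ and $y$ use under the equilibrium flow $F$, and recall that because $q$ is nondecreasing and $x \leq y$, we have $q(x) \leq q(y)$. The strategy is to show $\cost^F(y) \leq \cost^F(x)$ by letting the higher-income agent hypothetically deviate to the poorer agent's path $P_x$, and then arguing that on that path the higher-income agent pays even less per edge than the lower-income agent actually pays.

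The core chain of inequalities is:
\begin{align*}
  \cost^F(y) &= \sum_{e \in P_y} f_e(q(y), c^F(e), \toll_e) \\
    &\leq \sum_{e \in P_x} f_e(q(y), c^F(e), \toll_e) \\
    &\leq \sum_{e \in P_x} f_e(q(x), c^F(e), \toll_e) \;=\; \cost^F(x).
\end{align*}
The first inequality uses the Nash equilibrium condition~\eqref{eq:nasheq} applied to type $y$ with deviating path $P_x$, while the second inequality uses that each $f_e$ is nonincreasing in its first (income) argument per~\eqref{eq:edgecf}, together with $q(x) \leq q(y)$; the congestions $c^F(e)$ and tolls $\toll_e$ are unchanged since they are properties of the flow and the mechanism, not of the deviating agent.

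I expect no real obstacle here; the only subtlety is making sure the Nash condition is invoked in the right form. The statement in~\eqref{eq:nasheq} is written in terms of the cost aggregator $C$, so in the~\eqref{eq:cf1} model the deviation inequality is immediate as written above, while in the~\eqref{eq:cf2} model one first divides through by the common positive factor $q(y)$ before the comparison. Either way, the argument reduces to: the richer agent, choosing optimally, pays less than she would on the poorer agent's path, and on the poorer agent's path each edge cost is weakly smaller for her than for the poorer agent because $f_e$ discounts cost in income. Hence $\cost^F(x) \geq \cost^F(y)$, completing the proof.
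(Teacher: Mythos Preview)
Your proof is correct and follows the same idea as the paper's: the richer agent can deviate to the poorer agent's path and, by the income-monotonicity of each $f_e$, incur weakly smaller cost there, which combined with the equilibrium condition yields $\cost^F(y)\le\cost^F(x)$. The paper phrases this as a one-line proof by contradiction, whereas you write out the direct chain of inequalities and flag the CF1/CF2 distinction explicitly, but the underlying argument is identical.
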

\begin{proof}
  Suppose \( \cost^F(x) < \cost^F(y) \). Since the edge cost functions are nonincreasing functions of the income, the richer agent of type $y$ can deviate to the path chosen by agent of type $x$ and strictly decrease her cost, contradiction.
\end{proof}

We can now prove the Iniquity Theorem.

\begin{proof}{\textbf{The Iniquity Theorem:}}
    In Lemma \ref{lem:giniscale}, if \( \iid \) is our initial income distribution, \textit{before the game is played}, then \( \iidb \) is our income distribution \textit{after the game is played}. Indeed, by Lemma \ref{lem:decreasingcosts}, smaller incomes have an absolute higher cost, and thus the transformation from \( \iid \) to \( \iidb \) is obtained for some nondecreasing \( \beta \) satisfying the above conditions.

Indeed, for the \ref{eq:cf2} model, the \textit{ex post} income distribution
\[
  \Big( \ii(x) - \alpha \cdot \ii(x) \cdot \sum_{e\in F(x)} f_e(\ii(x), c^F(e), \toll_e) \Big)_x
\]
where the cost of using a path is scaled by the income of the player. In that case, a simple factorization by \( \ii(x) \) yields the \textit{ex post} distribution
\[
  \Big( \ii(x) (1 - \alpha \cdot \sum_{e\in F(x)} f_e(\ii(x), c^F(e), \toll_e)) \Big)_x \, .
\]
Since we have shown in Lemma \ref{lem:decreasingcosts} that \( \sum_{e\in F(x)} f_e(\ii(x), c^F(e), \toll_e) \) is decreasing in \( x \),
this is the setting of Lemma \ref{lem:giniscale} where incomes are scaled by a function \( \beta(x) \) nondecreasing in \( x \). The Iniquity Theorem follows. For the \ref{eq:cf1} model the analogous calculations yield \textit{ex post} distribution
 \[ \Big( \ii(x) (1 - \alpha \cdot\frac{ \sum_{e\in F(x)} f_e(\ii(x), c^F(e), \toll_e))}{\ii(x)} \Big)_x \]
 and the Iniquity Theorem holds again.
\end{proof}

\paragraph{The asymmetric case}
In the case of  multiple source-destination pairs 
 the inequality within each set of players in any single commodity is again worsened as a result of tolls. Such a statement is not obtainable for the society as a whole. In Appendix \ref{sec:asymmetric} we show how to create, admittedly contrived, counterexamples  where despite the fact that within each subpopulation the inequality worsens the population as a whole becomes more equal (e.g. the rich and poor use different subnetworks and only the rich get taxed).  We believe that such adversarial counterexamples may be circumvented by imposing more realistic models, and pose this as one of the possible directions for future work.

\paragraph{Computing the efficiency-equality trade-off}
In a {\em parallel links network} serving a population with a known income distribution, the routing and tolls that optimize any desired trade-off between efficiency and equality can be computed via dynamic programming. Such an algorithm is given in Appendix \ref{sec:tradeoff}.


\section{The Iniquity Index}
\label{sec:iniquity}
The Iniquity Theorem shows that under general conditions of the cost functions, the income inequality between agents increases after tolls are levied.
 In this section, we quantify this deterioration of equality by introducing a new metric.
We have captured the importance of the game costs to the agents' income by a parameter \( \alpha > 0\), intuitively small.
 The iniquity (index) is defined as the derivative of the Gini coefficient as \( \alpha \) goes to zero.

\begin{definition}
  Let \( \Gamma \) be a nonatomic symmetric congestion game.
  Agents have an initial \textit{ex ante} distribution \( (\ii(x))_{x \in [0,1]} \) and incur a cost \( \cost^F(x) \) under flow \( F \).
  Let \( \ii_\alpha(x) = \ii(x) - \alpha \cdot \cost^F(x) \) be the \textit{ex post} income distribution for some \( \alpha > 0 \).
  The \textit{iniquity} of \( \Gamma \) is defined as
  \[
    I(\Gamma) = \lim_{\alpha \to 0^+} \frac{G(q_\alpha) - G(q)}{\alpha}.
  \]
\end{definition}

Note that this notion is well-defined. The Gini coefficient for distribution \( q_\alpha \) is given by
\[
  G(q_\alpha) = 1-2\frac{\int_0^1 \int_0^x (q(t) - \alpha \cdot \cost^F(t)) dt dx}
    {\int_0^1 (q(x) - \alpha \cdot \cost^F(x)) dx} =
  1-2\frac{\int_0^1 Q(x)dx - \alpha \int_0^1 \int_0^x \cost^F(t) dt dx}
    {\mu - \alpha \cdot SC}
\]
where \( \mu \) is the total income of distribution \( q \) and \( SC \) is the social cost. This function is indeed differentiable with respect to \( \alpha \), provided the obvious requirement of \( \mu > 0 \) is satisfied.

The Iniquity Theorem implies that the iniquity is always nonnegative. For the rest of the paper we will focus on the canonical cost functions~\eqref{eq:canonical}.
 As a reminder, the cost of agent \( x \) in edge \( e \)  is
\begin{equation*}
  \ii(x) \cdot f_e(\ii(x), c^F(e), \toll_e) =
    \ii(x) \cdot \latency{e}(c^F(e)) + \toll_e\,.
\end{equation*}

The canonical cost functions, besides having strong experimental justification \citep{abrantes2011meta, borjesson2012income}
provide also significant advantages in the theoretical study of iniquity. Specifically, the iniquity index is invariant under scaling of the population incomes.

\begin{theorem}
  \label{thm:scaleinv}
  {\bf Robustness under scaling of income.} Assume agent cost functions are in canonical form~\eqref{eq:canonical} in a game \( \Gamma \). Then the iniquity is scale invariant: if all incomes are scaled by a constant \( \lambda > 0 \) and optimal tolls are used in the resulting game \( \Gamma_\lambda \), then \( \iniquity(\Gamma) = \iniquity(\Gamma_\lambda) \).
\end{theorem}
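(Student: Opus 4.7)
The plan is to show that scaling incomes by $\lambda$ scales both the optimal tolls and the equilibrium costs by exactly $\lambda$, which in turn scales the \emph{ex post} distribution by $\lambda$, so that scale invariance of the Gini coefficient finishes the argument.

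\textbf{Step 1: Optimal flow is unchanged; optimal tolls scale by $\lambda$.} In the canonical form \eqref{eq:canonical}, the social cost under a flow $F$ is $\int_0^1 \ii(x)\sum_{e\in F(x)}\ell_e(c^F(e))\,dx$ (tolls are pure transfers and cancel out in aggregate welfare). When all incomes are multiplied by $\lambda>0$, this objective is multiplied by $\lambda$, so the set of socially optimal flows is identical in $\Gamma$ and $\Gamma_\lambda$. Now a flow $F^\ast$ is a Nash equilibrium with tolls $(\tau_e)$ iff, for every agent $x$ and every alternative path $P$, $\sum_{e\in F^\ast(x)}[\ii(x)\ell_e(c^{F^\ast}(e))+\tau_e] \leq \sum_{e\in P}[\ii(x)\ell_e(c^{F^\ast}(e))+\tau_e]$. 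Replacing $\ii(x)$ by $\lambda\ii(x)$ and $\tau_e$ by $\lambda\tau_e$ is equivalent to multiplying the entire inequality through by $\lambda$, hence preserves it. Therefore the optimal tolls in $\Gamma_\lambda$ are $\lambda\tau_e$, and they sustain the \emph{same} optimal flow $F^\ast$ as a Nash equilibrium.

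\textbf{Step 2: Equilibrium costs scale by $\lambda$.} Under the canonical form, the cost to type $x$ in $\Gamma_\lambda$ at flow $F^\ast$ is $\sum_{e\in F^\ast(x)}[\lambda\ii(x)\ell_e(c^{F^\ast}(e)) + \lambda\tau_e] = \lambda\cdot\cost^{F^\ast}(x)$. Consequently the \emph{ex post} income distribution in $\Gamma_\lambda$ is
\begin{equation*}
\tilde q_\alpha(x) \;=\; \lambda\ii(x) - \alpha\cdot\lambda\cost^{F^\ast}(x) \;=\; \lambda\bigl(\ii(x)-\alpha\cdot\cost^{F^\ast}(x)\bigr) \;=\; \lambda\, q_\alpha(x).
\end{equation*}

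\textbf{Step 3: Invoke scale invariance of the Gini.} Because the Gini coefficient satisfies $G(\lambda q)=G(q)$ for any $\lambda>0$ (stated among the axiomatic properties in Section \ref{sec:prelim}), we obtain $G(\tilde q_\alpha) = G(q_\alpha)$ and $G(\tilde q) = G(q)$. Dividing by $\alpha$ and taking $\alpha\to 0^+$ yields $\iniquity(\Gamma_\lambda) = \iniquity(\Gamma)$, as desired.

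\textbf{Main obstacle.} The only real content is Step 1: justifying that ``optimal tolls'' under the canonical cost transform linearly in $\lambda$. This hinges on two things that are easy to overlook: (i) the social welfare objective should exclude tolls (as internal transfers) so that the optimum flow itself is scale-free in $\lambda$, and (ii) the equilibrium conditions under \eqref{eq:canonical} are homogeneous of degree one jointly in $(\ii,\tau)$, which makes the linear scaling of tolls the unique way to preserve incentive compatibility of the optimum. Once this is established, Steps 2--3 are automatic and the theorem follows immediately from the well-known scale invariance of $G$.
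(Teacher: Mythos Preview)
Your proof is correct and follows essentially the same route as the paper: show that under the canonical form the optimal tolls in $\Gamma_\lambda$ are $\lambda\tau_e^\ast$ and sustain the same equilibrium flow, deduce that the \emph{ex post} distribution in $\Gamma_\lambda$ is $\lambda q_\alpha$, and finish with the scale invariance of the Gini coefficient. The paper carries out Step~1 by observing that the edge cost function $f_e(\ii,z,\tau_e)=\tau_e/\ii+\ell_e(z)$ satisfies $f_e(\lambda\ii,z,\lambda\tau_e)=f_e(\ii,z,\tau_e)$, which is exactly your homogeneity observation phrased at the level of $f_e$ rather than at the level of the Nash inequality; the remaining steps are identical.
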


\begin{proof}
  Under Equation~\eqref{eq:canonical}, the \textit{edge} cost functions are of the form:
  \[
    f_e(\ii, z, \toll_e) = \frac{\toll_e}{\ii} + \latency{}(z).
  \]
  Optimal tolls \( \toll^*_e \) in the original (unscaled) game \( \Gamma \) minimize the social cost (the sum of all players latencies) as defined by the social planner. In the scaled version of the game, \( \Gamma_\lambda \), the new optimal tolls \( \hat{\toll}^*_e \) should be such that the resulting flow is identical to the minimizing flow in the original game. In that case, the social optimum is realized for \( \Gamma_\lambda \).

  It is possible to show that this result holds for \( \hat{\toll}^*_e = \lambda \toll^*_e \). Indeed, we immediately get
  \[
    f_e(\lambda \ii, z, \lambda \toll^*_e) = \frac{\lambda \toll^*_e}{\lambda \ii} + \latency{}(z)
      = \frac{\toll^*_e}{\ii} + \latency{}(z) = f_e(\ii, z, \toll^*_e),
  \]
  implying that players' costs in the scaled version of the game are identical to the costs in the original game.

  The Gini coefficient is scale invariant in the sense that if all incomes of distribution \( q \) are multiplied by \( \lambda > 0 \), then \( G(q) = G(\lambda q) \). In the canonical form~\eqref{eq:canonical}, the \textit{ex post} distribution is
  \[
    (\epii(x))_x = \Big( \ii(x) (1 - \alpha \cdot \sum_{e\in F(x)} f_e(\ii(x), c^F(e), \toll^*_e)) \Big)_x
  \]
  If incomes are scaled by \( \lambda > 0 \) and optimal tolls are selected by the social planner, the new distribution is
  \[
    \Big( \lambda \ii(x) (1 - \alpha \cdot \sum_{e\in F(x)} f_e(\lambda \ii(x), c^F(e), \lambda \toll^*_e)) \Big)_x =
    \lambda \cdot \Big( \ii(x) (1 - \alpha \cdot \sum_{e\in F(x)} f_e(\ii(x), c^F(e), \toll^*_e)) \Big)_x
  \]
  for which the Gini coefficient is equal to \( G(\epii) \). This further implies that the iniquity of game \( \Gamma_\lambda \) is equal to that of \( \Gamma \).
\end{proof}


\paragraph{No-regret learning} So far we have looked at the iniquity index in the context of agents playing the Nash Equilibrium of the routing game. However, it is possible to relax this assumption and let agents implement a no-regret strategy of their own.

 Let \( F_1, F_2, \dots \) be a sequence of flows obtained from agents repeatedly playing the game. Agent \( x \) is implementing a no-regret algorithm if it has vanishing regret, i.e.
\[
  R(T) = \frac{1}{T} \sum_{i = 1}^T cost^{F_i}(x) -
    \min_{p \in \mathcal{P}} \frac{1}{T} \sum_{i = 1}^T \sum_{e \in p} f_e(\ii(x), c^{F_i}(e), \toll_e) \to 0 \text{ as } T \to \infty
\]

\newcommand{\appflow}{F_\epsilon}
We also call an \( \epsilon \)-approximate Nash Equilibrium a flow \( F_\epsilon \) such that
\[
  \int_0^1 \cost^{\appflow}(x) dx - \min_{p \in \mathcal{P}} \sum_{e \in p} f_e(q(x), c^{\appflow(e)}, \toll_e) \leq \epsilon \, .
\]

Following the results in \citep{blum2006routing}, we can show that under regret minimizing agents, the flow
converges to that of an approximate equilibrium under the assumption of a finite number of wealth/income levels \( w_1, \dots, w_K \). This assumption is rather realistic since in practice there can only be a finite number of income levels. Also, any continuous distributions over incomes can be approximated to arbitrary high accuracy by a distribution of finite but large enough support.

\begin{theorem}
  \label{thm:noregret}
 {\bf  Robustness under no-regret learning.} Given a finite number of income levels, the iniquity index is uniquely defined under the assumption of no-regret learning agents. Specifically, if all agents follow a no-regret algorithm, we have
  \[
    \lim_{\alpha \to 0; \alpha > 0} \lim_{T \to \infty} \frac{\frac{1}{T}\sum_{t = 1}^T G(\epii^t) - G(\ii)}{\alpha} =
      I(\Gamma)
  \]
where \( \epii^t \) is the \textit{ex post} income distribution of the \( t \)-th instance of the game.
\end{theorem}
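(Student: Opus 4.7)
The plan is to exploit two ingredients: the convergence of no-regret dynamics in nonatomic congestion games (Blum et al.) and the fact that, near $\alpha=0$, the Gini coefficient depends linearly on the vector of per-type costs so that time-averaging commutes with evaluating $G$ up to $o(\alpha)$ terms. Because incomes come from a finite set $\{w_1,\dots,w_K\}$, the flow lives in a finite-dimensional simplex, equilibrium costs per type are unique (by standard monotonicity/Beckmann-potential arguments for nonatomic congestion games), and all the continuity and boundedness arguments below are uniform.

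First I would fix $\alpha>0$ small enough that $q_\alpha^t(x)=q(x)-\alpha\cdot\cost^{F_t}(x)$ is strictly positive and order-preserving in $x$ (using Lemma~\ref{lem:decreasingcosts}, which only uses the equilibrium property, replaced here by Blum et al.'s guarantee that the time-average flow is an $\epsilon(T)$-approximate NE with $\epsilon(T)\to 0$; for finitely many types this forces the monotonicity of costs in income to hold in the limit). Then, using the explicit expression for $G(q_\alpha)$ displayed right after the definition of iniquity, I would perform a first-order Taylor expansion in $\alpha$ around $\alpha=0$. This yields
\begin{equation*}
G(q_\alpha^t)=G(q)+\alpha\cdot\Lambda\bigl(q,\cost^{F_t}\bigr)+R(\alpha,\cost^{F_t}),
\end{equation*}
where $\Lambda(q,\cdot)$ is a fixed linear functional on the finite-dimensional space of type-indexed cost vectors (obtained by differentiating the ratio expression for $G$ at $\alpha=0$), and the remainder satisfies $|R(\alpha,\cost^{F_t})|\le K_0\alpha^2$ uniformly in $t$ because $\cost^{F_t}$ is bounded (the network, the delay functions, the tolls and the income range are all fixed).

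Second I would time-average. Linearity of $\Lambda(q,\cdot)$ gives
\begin{equation*}
\frac{1}{T}\sum_{t=1}^T G(q_\alpha^t)=G(q)+\alpha\cdot\Lambda\!\left(q,\tfrac{1}{T}\sum_{t=1}^T\cost^{F_t}\right)+O(\alpha^2),
\end{equation*}
so the problem reduces to showing that the time-averaged per-type cost $\tfrac{1}{T}\sum_t\cost^{F_t}(x)$ converges as $T\to\infty$ to the Nash equilibrium per-type cost $\cost^F(x)$ for each of the $K$ income levels. This is exactly where the no-regret hypothesis and the Blum et al. convergence theorem enter: the time-averaged flow $\bar F_T$ is an $\epsilon(T)$-approximate equilibrium with $\epsilon(T)\to 0$, and in a nonatomic congestion game with finitely many types approximate equilibria have approximately unique per-type costs (by continuity of $\cost^{\bar F_T}(x)$ in the flow together with uniqueness of Wardrop equilibrium costs). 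Hence $\lim_{T\to\infty}\tfrac{1}{T}\sum_t\cost^{F_t}(x)=\cost^F(x)$, and the inner limit evaluates to $G(q)+\alpha\cdot\Lambda(q,\cost^F)+O(\alpha^2)$. Dividing by $\alpha$ and letting $\alpha\to 0^+$ yields $\Lambda(q,\cost^F)$, which by the same Taylor expansion applied to the equilibrium flow equals $I(\Gamma)$.

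The main obstacle I expect is the interchange of the $T$-limit with the Taylor remainder: I need the $O(\alpha^2)$ bound $|R(\alpha,\cost^{F_t})|\le K_0\alpha^2$ to hold with a constant $K_0$ that does not depend on $t$, so that averaging preserves it. Finiteness of the income support, boundedness of the delay functions on the compact simplex of flows, and positivity of $q(0)$ (so that the denominator $\mu-\alpha\cdot SC$ in $G(q_\alpha)$ stays bounded away from zero uniformly in $t$) together supply this uniform control. A secondary technical point is showing that the Blum et al. guarantee, stated for a single-population routing game, extends to the multi-type setting by applying it separately to each of the $K$ type-populations while coupling them through the shared congestion vector; the argument is the standard one, since each type's no-regret guarantee holds against the realized sequence of congestions, and uniqueness of equilibrium costs is preserved under this enlargement.
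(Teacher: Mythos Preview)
Your overall strategy---Taylor expand $G(q_\alpha)$ at $\alpha=0$, exploit that the first-order coefficient $\Lambda(q,\cdot)$ is a \emph{linear} functional of the cost vector, push the time-average inside $\Lambda$, and reduce everything to convergence of $\tfrac{1}{T}\sum_t \cost^{F_t}$---is genuinely different from the paper's route and is arguably cleaner. The paper does not linearize at all: it works directly with the values $G(\hat q^t)$, splits the time horizon into ``good'' periods $A_\epsilon$ (where the realized flow is an $f(\epsilon)$-approximate equilibrium) and a vanishing fraction of ``bad'' periods $B_\epsilon$, bounds the bad periods crudely by $G\le 1$, and on the good periods uses continuity of $G$ in the cost vector together with uniqueness of equilibrium path costs. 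Your approach makes the dependence on $\alpha$ explicit and isolates exactly what needs to converge; the paper's approach avoids the Taylor remainder bookkeeping but has to argue continuity of $G$ directly. The paper also handles the multi-type issue by an explicit reduction to an asymmetric routing game (adding parallel constant-cost edges encoding $\tau_e/w_i$), rather than by invoking Blum et al.\ per type as you suggest.

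There is, however, a real gap in the step where you pass from ``$\bar F_T$ is an $\epsilon(T)$-approximate equilibrium'' to ``$\tfrac{1}{T}\sum_t \cost^{F_t}(x)\to \cost^F(x)$''. These are not the same thing: the cost of the time-averaged flow is \emph{not} the time-average of the costs, because latencies are nonlinear. For a single edge with $\ell(z)=z^2$ and flows alternating between $0$ and $1$, the averaged flow is $1/2$ with cost $1/4$, while the averaged cost is $1/2$. So knowing that $\bar F_T$ is near equilibrium tells you $\cost^{\bar F_T}\to\cost^F$, but your linearized expression needs $\tfrac{1}{T}\sum_t \cost^{F_t}$ instead. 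The fix is precisely the stronger Blum et al.\ statement the paper uses: all but a vanishing fraction of \emph{individual} time steps are $f(\epsilon)$-approximate equilibria. Combined with the compactness/uniqueness argument you already sketch (any sequence of $\epsilon_k$-equilibria has cost vectors converging to the unique equilibrium cost vector), this gives $\cost^{F_t}(x)\to\cost^F(x)$ along the good periods, while the bad periods contribute a bounded quantity times a vanishing fraction. With that correction your argument goes through; the uniform $O(\alpha^2)$ remainder bound is fine for the reasons you state.
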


\begin{proof}
The proof consists of two steps. In the first step we will show that our symmetric game of type-specific costs \( \Gamma \) reduces to an asymmetric congestion game \( \hat{\Gamma} \). In the second step, we will apply results about the behavior of no-regret dynamics in asymmetric congestion games from \citep{blum2006routing} to prove the robustness of the iniquity index.

For every edge \( e \) in \( \Gamma \), construct the parallel edges \( (\hat{e}_i)_{i=1}^K \) linking \( e \) to its original endpoint.
The cost of edge \( \hat{e}_i \) is constant and equal to \( \frac{\toll_e}{w_i} \).
Now for each path \( p \in \mathcal{P} \), the player of type \( i \) has an associated path \( \hat{p} \in \hat{\mathcal{P}} \) that uses all the edges in \( p \) as well as \( \hat{e}_i \). Since \( \Gamma \) and \( \hat{\Gamma} \) are payoff-equivalent games, under the assumption that our latency functions have bounded slope we can use the results from \citep{blum2006routing} to show that the iniquity is stable under no-regret learning.

Under assumptions \eqref{eq:cf1} and \eqref{eq:cf2} on the form of the agent cost functions, as long as latency functions are of bounded slope, \( \cost^F(x) \) is of bounded slope, since the income of players and tolls are bounded. No-regret algorithms will therefore approach an approximate \( \epsilon \)-Nash equilibrium for some \( \epsilon > 0 \). Lastly, we will show that this implies the iniquity of the game \( \Gamma \) under no-regret algorithms approaches that of \( \Gamma \).

Let \( \alpha \) and some \( \epsilon \) be fixed. There exists a time \( T_\epsilon \) such that after \( T_\epsilon \), \( R(T) \leq \epsilon \).
By \citep{blum2006routing}, at most a fraction \( K \epsilon^\frac{1}{4} \) of the first \( T_\epsilon \) games is not an \( f(\epsilon) \)-NE, where \( f \) is a function that goes to zero when \( \epsilon \) also goes to zero
and \( K \) is a constant.\footnote{Precisely, there is a fraction \( ms^{\frac{1}{4}}\epsilon^{\frac{1}{4}} \) of time periods where the flow is not an \( \epsilon + 2 \sqrt{s\epsilon n} + 2m^\frac{3}{4} s^\frac{1}{4} \epsilon^\frac{1}{4} \)-NE, where \( m \) is the number of edges, \( s \) is a bound on the slope of edge cost functions and \( n \) is the largest number of edges in a path.}
\( T_\epsilon \) is a function in \( O(\frac{1}{\epsilon^2}) \) so as \( \epsilon \) goes to zero we have that \( T_\epsilon \) goes to \( \infty \).
Call \( A_\epsilon \) the set of time periods where \( \epii^t \) is obtained from an \( f(\epsilon) \)-approximate NE and \( B_\epsilon \) where this is not the case. We have
\begin{align}
  & \frac{1}{T_\epsilon} \sum_{t = 1}^{T_\epsilon} G(\epii^t) =
    \frac{1}{T_\epsilon} \sum_{t \in A_\epsilon} G(\epii^t) + \frac{1}{T_\epsilon} \sum_{t \in B_\epsilon} G(\epii^t) \label{eq:noregret} \\
    \text{and } & \, \frac{1}{T_\epsilon} \sum_{t \in B_\epsilon} G(\epii^t) \leq \frac{1}{T_\epsilon} K \epsilon^\frac{1}{4} \cdot T_\epsilon = K \epsilon^\frac{1}{4} \label{eq:beps}
\end{align}
where the last inequality holds due to the Gini coefficient being bounded by 0 and 1. Thus, the maximum distance between the Gini coefficient of \( \epii^t \) and that of the NE is 1, for the at most \( \epsilon T_\epsilon \) time steps where we do not have an \( f(\epsilon) \)-NE.

It remains to prove that for the other time steps in \( A_\epsilon \), we are approaching the NE costs that give rise to \( \epii \) as \( \epsilon \) goes to zero.
Indeed, let \( \appflow \) be the flow corresponding to an \( \epsilon \)-approximate NE.
In congestion games, the vector of path costs \( (C_p)_{p \in \mathcal{P}} \) realized at a NE is unique.
Take a decreasing sequence of \( (\epsilon_k)_k \to 0 \) as \( k \to \infty \) and flows \( F_{\epsilon_k} \) that are \( \epsilon_k \)-NE.
Their associated cost vectors are \( (C_p^{\epsilon_k})_p \). Suppose that as \( \lim_{k \to \infty} (C_p^{\epsilon_k})_p \neq (C_p)_p \).
By compactness, up to a subsequence, \( F_{\epsilon_k} \) converges to some flow \( F \) that is a NE. But for this flow \( F \) the path costs are different from \( (C_p)_p \). This is a contradiction.

On the other hand, the Gini coefficient is a continuous function of the agents' costs, so any sequence of \( \epsilon \)-NE approximating a NE of the game will approach its Gini coefficient. Coming back to Equations~\eqref{eq:noregret} and~\eqref{eq:beps}, we know that
\[
  \frac{1}{T_\epsilon} \sum_{t = 1}^{T_\epsilon} G(\epii^t) \sim_{\epsilon \to 0}
    (1-\epsilon) G(\epii) + O(\epsilon^\frac{1}{4})
\]
and thus \( \frac{1}{T} \sum_{t = 1}^{T} G(\epii^t) \to G(\epii) \) as \( T \to \infty \).
\end{proof}

\section{Computing the Iniquity in Pigou}
\label{sec:pigou}
To illustrate the interplay between wealth or income and congestion
games, we consider the well-studied Pigou network, which consists of
two parallel links (Figure~\ref{fig:pigou}) with latency functions
$\latency{u}(r)=1$ and $\latency{d}(r)=r$. Assume that this
transportation network is used by a population of (normalized) size 1
and with wealth or income function $\ii(x)=x^\beta$, for some
nonnegative parameter~$\beta$.  We assume that the population is
ordered by income and normalized, so that $x$ represents the quantile
of members that are richer than an $x$ fraction of the total
population.

The perceived cost for quantile $x$ is the canonical form~\eqref{eq:canonical}
$\cost(x)=\latency{e}(\flow{e}) \cdot \ii(x)+\toll_e$, where $e=e(x)$ is the edge used by
the quantile, $\flow{e}$ is the flow through link $e$, and $\toll_e$ the
toll on link $e$.  It is not hard to argue that at equilibrium the
$\s$ fraction of the population that uses the upper link is the
poorest $\s$ part of the population, and therefore the perceived cost
is given by
\begin{align}
  \label{eq:x1}
  \cost(x)=
  \begin{cases}
    \latency{u}(\s) \ii(x)+\toll_u & x\leq \s \\
    \latency{d}(1-\s) \ii(x) + \toll_d & \text{otherwise}
  \end{cases}
                               &=\begin{cases}
                                 \ii(x)+\toll_u & x\leq \s \\
                                 (1-\s) \ii(x) + \toll_d &
                                 \text{otherwise}
                               \end{cases}
\end{align}
By continuity, \emph{at equilibrium the perceived cost of quantile $\s$ must be
the same in both links}, from which we get
$\ii(\s)+\toll_u=(1-\s) \ii(\s)+\toll_d$, or equivalently that the
difference in tolls $\toll=\toll_d-\toll_u$ is
\begin{align}
  \label{eq:x2}
  \toll=\ii(\s) \s.
\end{align}
Since this is always positive, without loss of generality we will
assume that there is no toll on the upper link and toll
$\toll=\ii(\s) \s$ in the lower link. We want to investigate the
effects of toll $\toll$ on the Gini coefficient, but for simplicity we
often use $\s$ as primary variable instead of $\toll$ (as suggested by
Equation~\eqref{eq:x2}; since $\s$ is monotone, we can easily
translate expressions with $\s$ to expressions with $\toll$).


To derive simpler analytical results in the following, we multiply the income distribution by a constant equal to \( \beta + 1 \). This however does not change the nature of the analysis due to Proposition \ref{thm:scaleinv} stating that under the canonical form~\eqref{eq:canonical} and optimal tolls, scaling the income distribution by a constant yields the same iniquity.

For the income distribution $\ii(x)=2x$, and the latencies $\latency{u}(x)=1$
and $\latency{d}(x)=x$, the toll is $\toll=2\s^2$
and \( \cost(x) = 2x \) if \( x\leq \s \), \( \cost(x) = (1-\s)2x + 2\s^2 \) otherwise.
Figure~\ref{fig:Pigou-social-cost} shows the social cost
$\int_0^1 \cost(x)\, dx=1 - \s + 2 \s^2 - \s^3$ as a function of
the toll $\toll$ and as a function of the switching point $\s$.  The
minimum value $23/27$ is achieved when the toll is $\toll=2/9$ and
$\s=1/3$ of the population uses the upper link. Note that this
minimum value is less than $1$, the best social cost with tolls when
we don't take into account the income distribution (or equivalently
in a society with Gini coefficient 0).

The Iniquity Theorem states that the shape~\eqref{eq:cf1} can be used for quantifying the agents' costs, effectively the canonical form divided by the income of the agent. This indeed gives similar results that are presented in Appendix \ref{sec:pigoucf1}.

Finally, for comparison with
the situation when income is not taken into account (or equivalently,
in a society with no inequality),
Figure~\ref{fig:Pigou-actual-latency} shows the corresponding actual
latency (time wasted in transportation). It depicts the social actual
latency $\int_0^{\s}1\,dx+\int_{\s}^1 1-\s\,dx=\s+(1-\s)^2$. The right
part of the figure is independent of the income distribution, and
shows the familiar minimum value $3/4$ (corresponding to price of
anarchy $4/3$), achieved when the flow is split equally. The left part
of figure shows the same quantity in terms of the toll $\toll$; this
figure depends on the income distribution because the toll
$\toll=\ii(\s) \s$ depends on the income distribution.

\begin{figure}
  \centering
  \begin{minipage}[c]{0.35\textwidth}
		\centering
		\includegraphics[width=\textwidth]{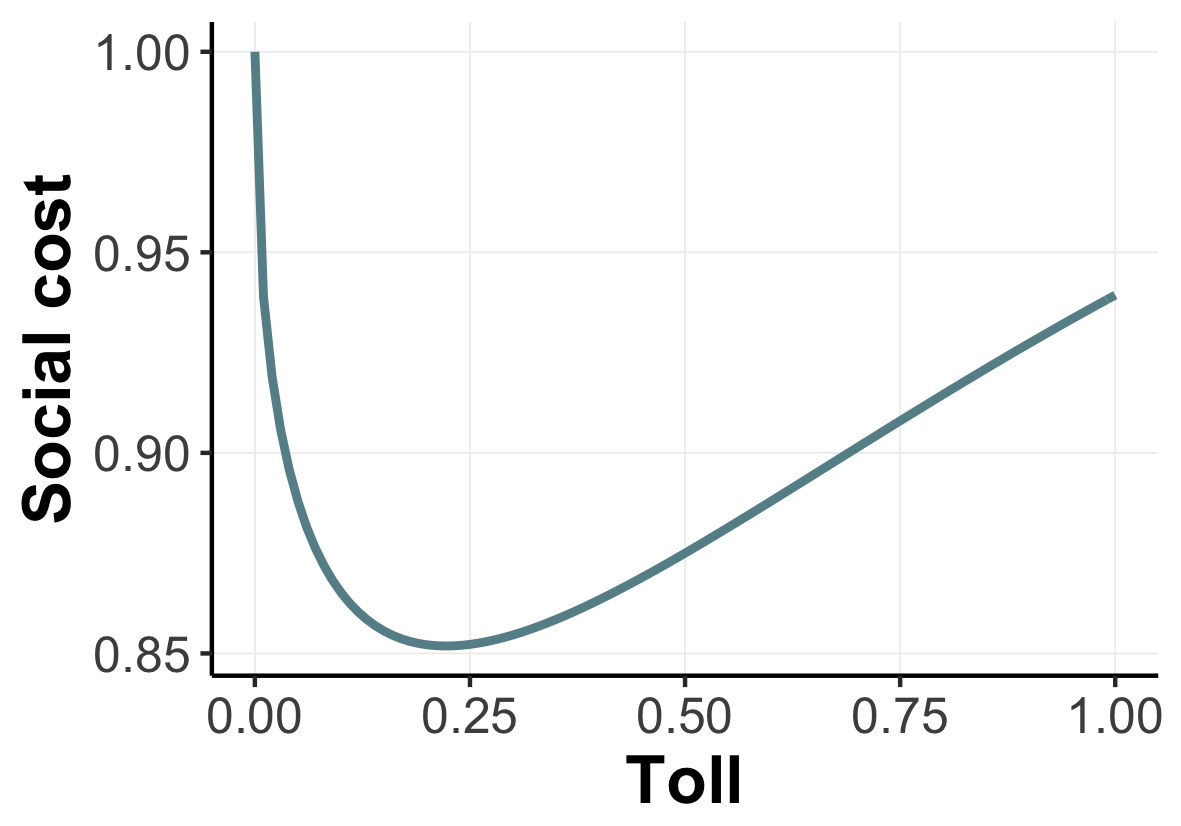}
  \end{minipage}
  \begin{minipage}[c]{0.15\textwidth}
		\begin{tikzpicture}
		\end{tikzpicture}
	\end{minipage}
  \begin{minipage}[c]{0.35\textwidth}
		\centering
		\includegraphics[width=\textwidth]{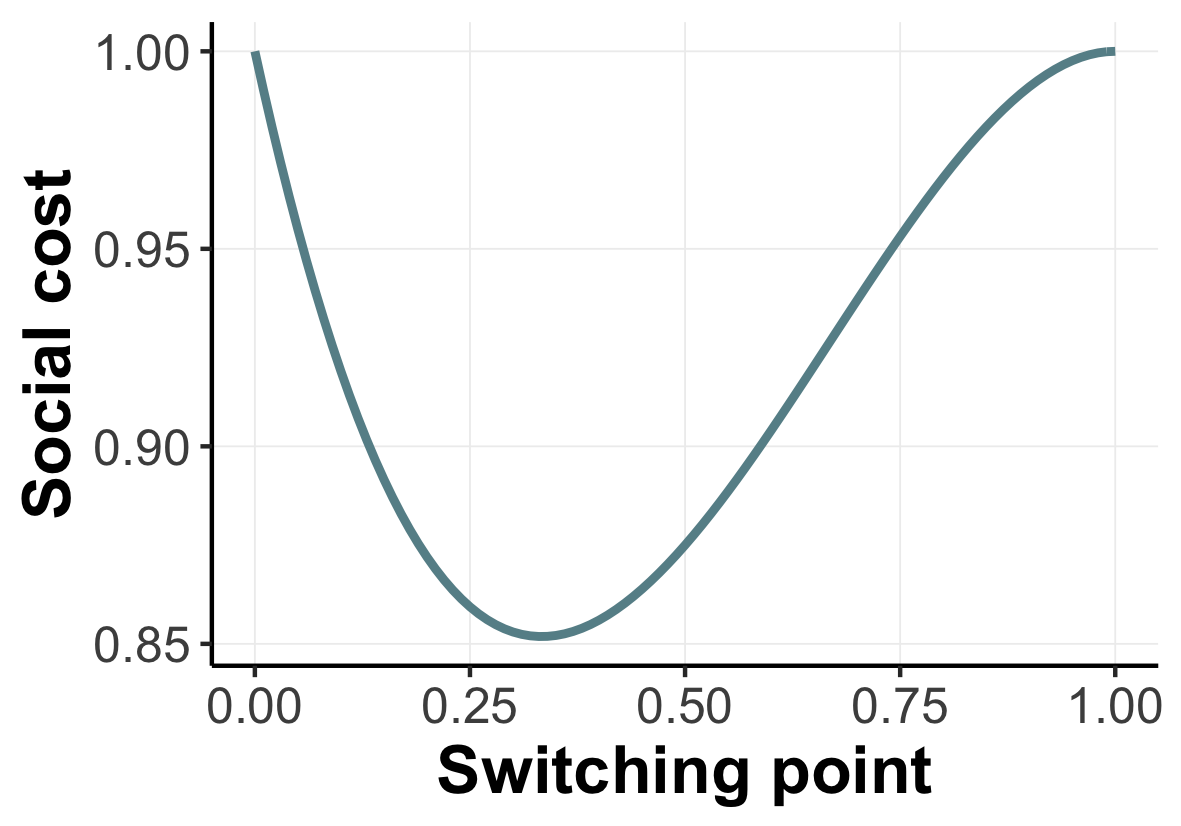}
  \end{minipage}
  \caption{The social cost (income loss) of the Pigou network with
    income distribution $\ii(x)=2x$, as a function of the tolls (left)
    and switching point $\s$ (right). The minimum value $23/27$ is
    achieved at $\toll=2/9$ and $\s=1/3$. Note that this minimum value
    is less than $1$, the best social cost with tolls when we don't
    take into account the income distributions (or equivalently in a
    society with Gini coefficient 0).}
  \label{fig:Pigou-social-cost}
\end{figure}


\begin{figure}
  \centering
  \begin{minipage}[c]{0.35\textwidth}
		\centering
		\includegraphics[width=\textwidth]{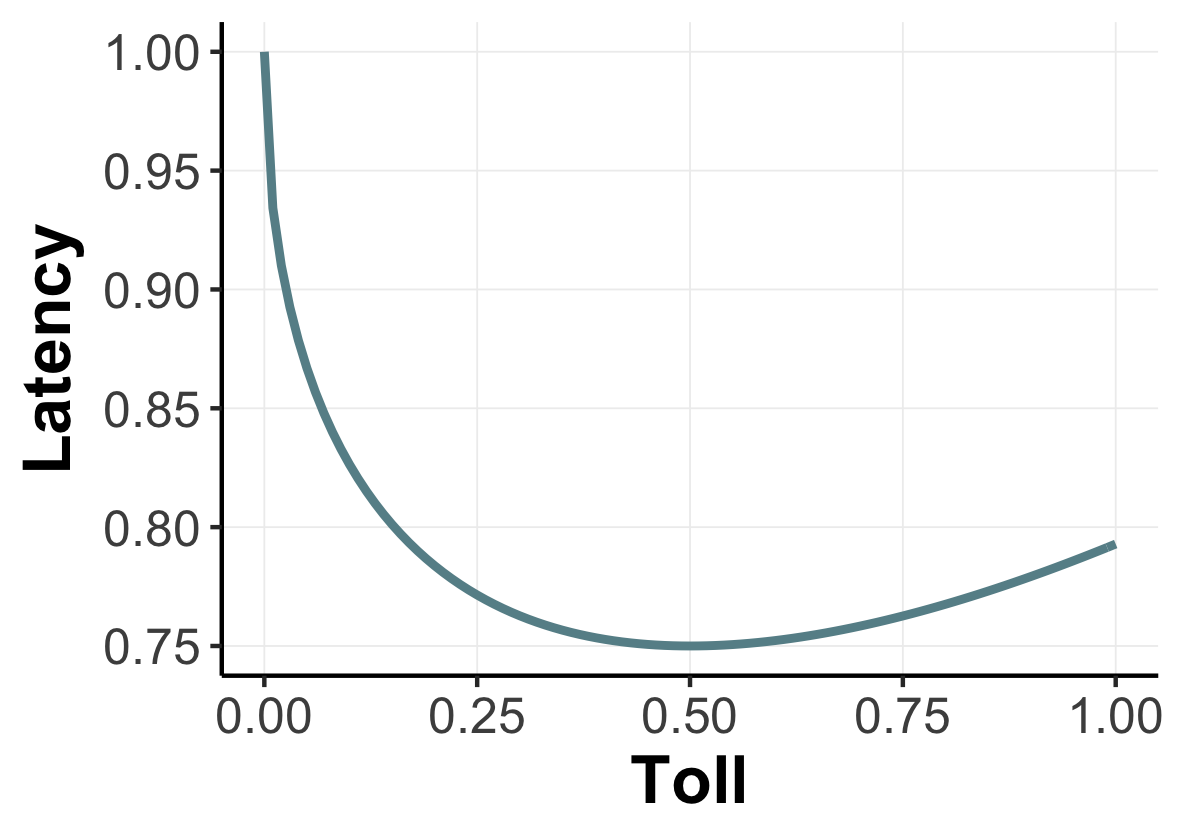}
  \end{minipage}
  \begin{minipage}[c]{0.15\textwidth}
		\begin{tikzpicture}
		\end{tikzpicture}
	\end{minipage}
  \begin{minipage}[c]{0.35\textwidth}
		\centering
		\includegraphics[width=\textwidth]{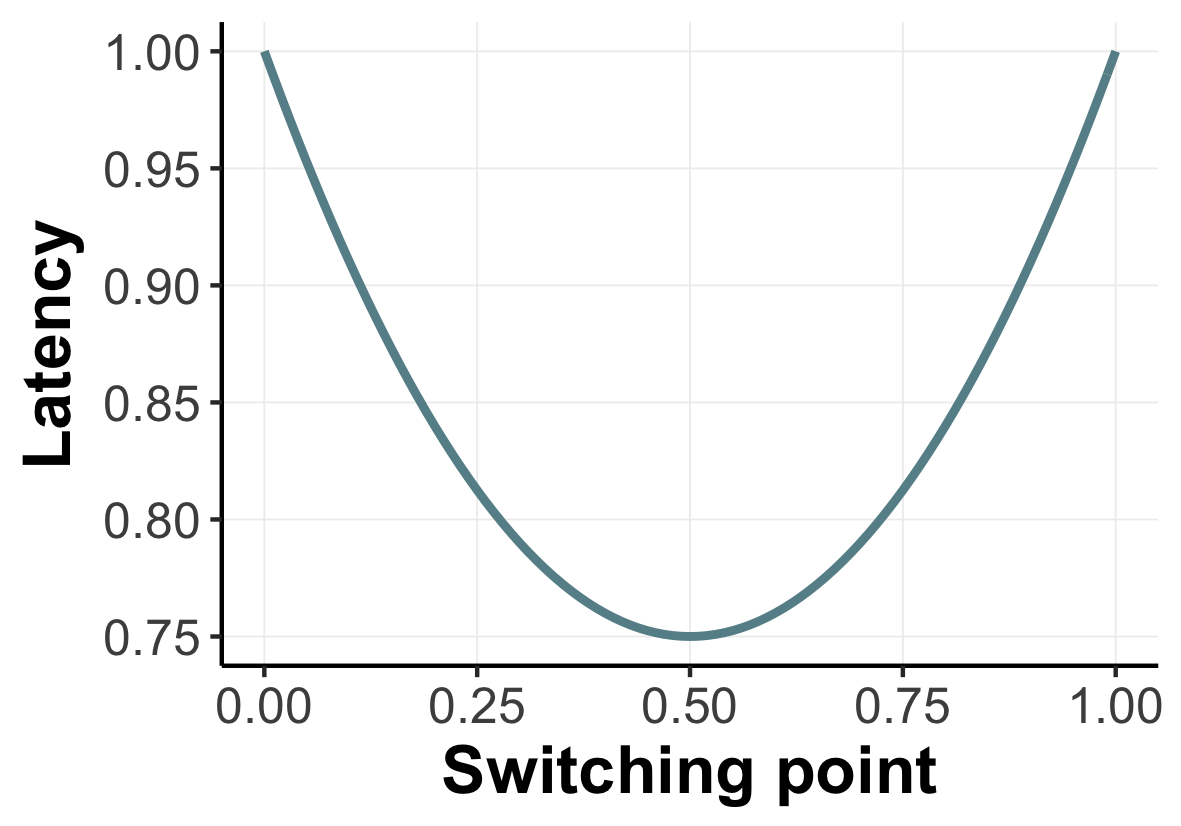}
  \end{minipage}
  \caption{The actual latency (time wasted in transportation) of the
    Pigou network as a function of the tolls (left) and switching
    point $\s$ (right). The right figure is independent of the income
    distribution. The minimum value $3/4$ is achieved at $\s=1/2$ and
    corresponds to the well-known price of anarchy $4/3$. The left
    figure depends on the income distribution because the toll
    $\toll=\ii(\s)\s$ depends on the income distribution. Here the
    income distribution is $\ii(x)=2x$.}
  \label{fig:Pigou-actual-latency}
\end{figure}


Let $\epii(x)=\ii(x)-\alpha\cdot\cost(x)$ be the perceived income when
we take into account the effects of perceived latency into the actual
income, where $\alpha$ indicates the importance of transportation. We are
interested in first order effects, so we will always assume that
$\alpha$ is very small and that in fact it tends to 0. Let's define as
$\Gini(\alpha, \toll)$ the inequality coefficient when we take into account
the effects on the income of the transportation cost, assuming that
the social designer selects toll $\toll$.

We will assume that \emph{the social designer selects tolls to minimize the
actual latency} on the network (the minimum in
Figure~\ref{fig:Pigou-actual-latency}).\footnote{There are reasonable
alternatives for the social planner, such as minimizing the social
cost, that we don't explore in this work.} For the Pigou network
the optimal switching point is $\s=1/2$.  For income distribution
$\ii(x)=(\beta+1)x^{\beta}$, this optimal switching point corresponds
to toll $\toll=\ii(\s)\s=(\beta+1)2^{-(\beta+1)}$. It is immediate
to see that for these income distributions, Equation~\eqref{eq:x1} for
the cost becomes \( \cost(x)= (\beta+1)x^{\beta} \) for \( x \leq \s \) and \( \cost(x) = (1-\s)(\beta+1)x^{\beta} + \toll \) otherwise.

We can now compute directly the Gini coefficient. The trick is
to express everything in terms of the switching point $\s$, instead of
the toll $\toll$. Since at equilibrium, Equation~\eqref{eq:x2},
$\toll=\ii(\s) \s=(\beta+1)\s^{\beta+1}=(\beta+1)2^{-(\beta+1)}$, we
have
\begin{align*}
\Gini(\alpha, (\beta+1)2^{-(\beta+1)} )
  &= \frac{\beta \left(\alpha \left(\beta +2^{\beta +2}+3\right)-2^{\beta
    +3}\right)}{2 (\beta +2) \left(\alpha \left(\beta +2^{\beta+1}+2\right)-2^{\beta +2}\right)} =
  \frac{\beta }{\beta +2}+\frac{\beta  (\beta
    +1)}{(\beta +2) 2^{\beta +3} } \alpha  + O\left(\alpha ^2\right).
 \end{align*}
 The last expression comes from the Maclaurin expansion of the
 function, from which we derive the following Theorem.
 \begin{theorem}
     \label{thm:pigou}
   For the Pigou network with two links and latency functions $1$ and
   $x$, and for a population with income distribution
   $\ii(x)=x^{\beta}$, when tolls are selected to minimize
   the actual latency, the toll at Nash (Wardrop) equilibrium is
   \( \toll=\ii(\s) \s= 2^{-(\beta+1)} \),
   and the Iniquity index is
   \( I(\Gamma) = \frac{d\Gini(0)}{d\alpha}=\frac{\beta  (\beta
     +1)}{(\beta +2) 2^{\beta +3} } \).
\end{theorem}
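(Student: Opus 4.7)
The plan is to first pin down the optimal toll, then compute the Gini coefficient of the ex-post distribution explicitly, and finally extract the linear coefficient in $\alpha$ via a Maclaurin expansion. The first step reuses the structural information already established in the paragraphs preceding the theorem: at equilibrium the poorest $\s$-fraction takes the upper link and the indifference condition gives $\toll = \ii(\s)\,\s$ (Equation~\eqref{eq:x2}). The ``actual latency'' to be minimized is $\s\cdot 1 + (1-\s)(1-\s) = \s + (1-\s)^2$, whose minimizer is $\s = 1/2$ independently of $\beta$. Substituting into the equilibrium relation yields $\toll = \ii(1/2)\cdot (1/2) = 2^{-(\beta+1)}$, establishing the first claim.

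Second, I would compute $G(q_\alpha)$ in closed form. By Theorem~\ref{thm:scaleinv} I may rescale the income distribution to $\ii(x) = (\beta+1)x^{\beta}$, which has mean $\mu = 1$ and simplifies several integrals. With $\s = 1/2$, Equation~\eqref{eq:x1} gives
\[
  \cost(x) = \begin{cases} (\beta+1)x^{\beta} & x \leq 1/2, \\ \tfrac{1}{2}(\beta+1)x^{\beta} + (\beta+1)2^{-(\beta+1)} & x > 1/2. \end{cases}
\]
The ex-post density $\epii_\alpha(x) = \ii(x) - \alpha\,\cost(x)$ is nondecreasing in $x$ for small $\alpha$ (the canonical form makes $\epii_\alpha(x) = \ii(x)(1 - \alpha \ell(c(e))) - \alpha\toll_e$ piecewise linear in $\ii(x)$ with nondecreasing proportionality coefficient, as used in the proof of Theorem~\ref{thm:iniquity}), so no reordering of quantiles is needed.

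Third, I plug the piecewise expression into the standard formula
\[
  G(\epii_\alpha) = 1 - \frac{2}{\mu_\alpha} \int_0^1\!\int_0^t \epii_\alpha(x)\,dx\,dt,
\]
splitting each inner integral at $x = 1/2$. The mean $\mu_\alpha = 1 - \alpha \cdot SC$ and the numerator $\int_0^1 Q_\alpha(t)\,dt = \frac{1}{\beta+2} - \alpha\cdot B$ are each sums of elementary integrals of $x^\beta$ and constants, giving $G(\alpha,\toll)$ as an explicit rational function of $\alpha$ and $\beta$, matching the closed form displayed just before the theorem.

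Finally, a Maclaurin expansion in $\alpha$ yields the constant term $\beta/(\beta+2)$ (a sanity check, since this is the well-known Gini of the Pareto-like distribution $(\beta+1)x^\beta$) and the linear term $\tfrac{\beta(\beta+1)}{(\beta+2)\,2^{\beta+3}}$, which by definition is $I(\Gamma)$. The only real obstacle is bookkeeping across the two pieces $[0,1/2]$ and $[1/2,1]$ and applying the quotient rule carefully at $\alpha = 0$; there is no conceptual difficulty once the scaling normalization $\mu = 1$ is in place.
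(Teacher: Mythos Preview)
Your proposal is correct and follows essentially the same route as the paper: determine $\s=1/2$ by minimizing the actual latency, read off $\toll$ from the equilibrium relation~\eqref{eq:x2}, rescale to $\ii(x)=(\beta+1)x^\beta$ via Theorem~\ref{thm:scaleinv}, compute $G(\epii_\alpha)$ as an explicit rational function of $\alpha$, and extract the linear coefficient by Maclaurin expansion. The paper carries out the same calculation (displaying the closed form for $G$ just before the theorem) with identical bookkeeping.
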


The values of the iniquity as a function of the income coefficient
$\beta$ are shown in Figure~\ref{fig:iniquity}. The maximum occurs when the income
coefficient $\beta$ is close to 2 (actually when $\beta\approx
1.688$), which means that real-life income distributions have almost
the maximum Iniquity index.

\begin{figure}
  \centering
  \begin{minipage}[c]{0.35\textwidth}
		\centering
		\includegraphics[width=\textwidth]{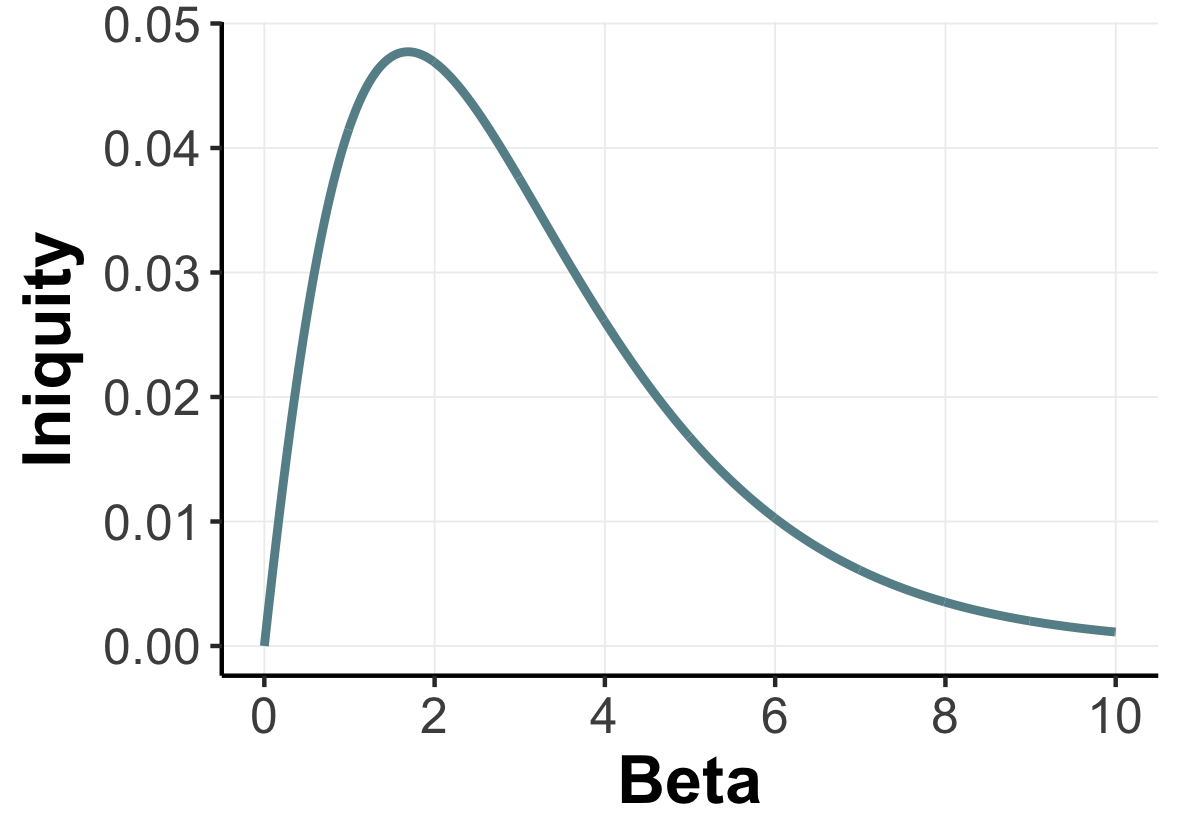}
	\end{minipage}
  \begin{minipage}[c]{0.15\textwidth}
		\begin{tikzpicture}
		\end{tikzpicture}
	\end{minipage}
	\begin{minipage}[c]{0.35\textwidth}
		\centering
    \renewcommand{\arraystretch}{1.4}
    $$
      \begin{array}{|l||cccccc|}
        \hline
        \beta & 0 & 1 & 2 & 3 & 4 & 5 \\
        \hline
        \Gini & 0 & \frac{1}{24} & \frac{3}{64} & \frac{3}{80} &
          \frac{5}{192} & \frac{15}{896} \\
        \hline
      \end{array}
    $$
	\end{minipage}
  \caption{\textit{Left:} The Iniquity index as a function of the income coefficient
    $\beta$. It is 0 when there is no inequality ($\beta=0$) because
    tolls have the same effect on everybody, and rises as the inequality
    increases. At some point, $\beta\approx 1.688$, the Iniquity index
    starts decreasing, because the toll $(\beta+1)2^{-(\beta+1)}$
    becomes small and has little effect on the inequality index. \textit{Right:} Values of the iniquity for different \( \beta \).}
  \label{fig:iniquity}
\end{figure}

\section{Tolls and Inequality: Empirical Findings}
\label{sec:data}

We use detailed transportation data gathered through Singapore's National Science Experiment (NSE) to test how income inequality affects the distribution of transportation delays in a representative sample of students \citep{Monnot2016,monnot2017routing,wilhelm2016sensg}. Although Singapore is the third most densely populated country in the world, the modern infrastructure, cost of private cars, and significant tolls in Singapore minimize congestion on the roads. We examine whether this gain in efficiency incurs costs in terms of income inequality, as predicted by the theoretical results in this paper. The NSE dataset enables us to accurately split student trips in the morning---the time of the day when tolls are most onerous---by the transportation mode (bus, car, walk, and train) \citep{Wilhelm2017}. We then combine the travel data with a dataset on property prices to assess the relationship between income and the average duration and average distance of trips by transportation mode (for details, please see the Appendix \ref{sec:data-app}).

\begin{figure}[!ht]
    \centering
    \begin{minipage}[c]{0.46\linewidth}
        \centering
        \includegraphics[width=0.95\textwidth]{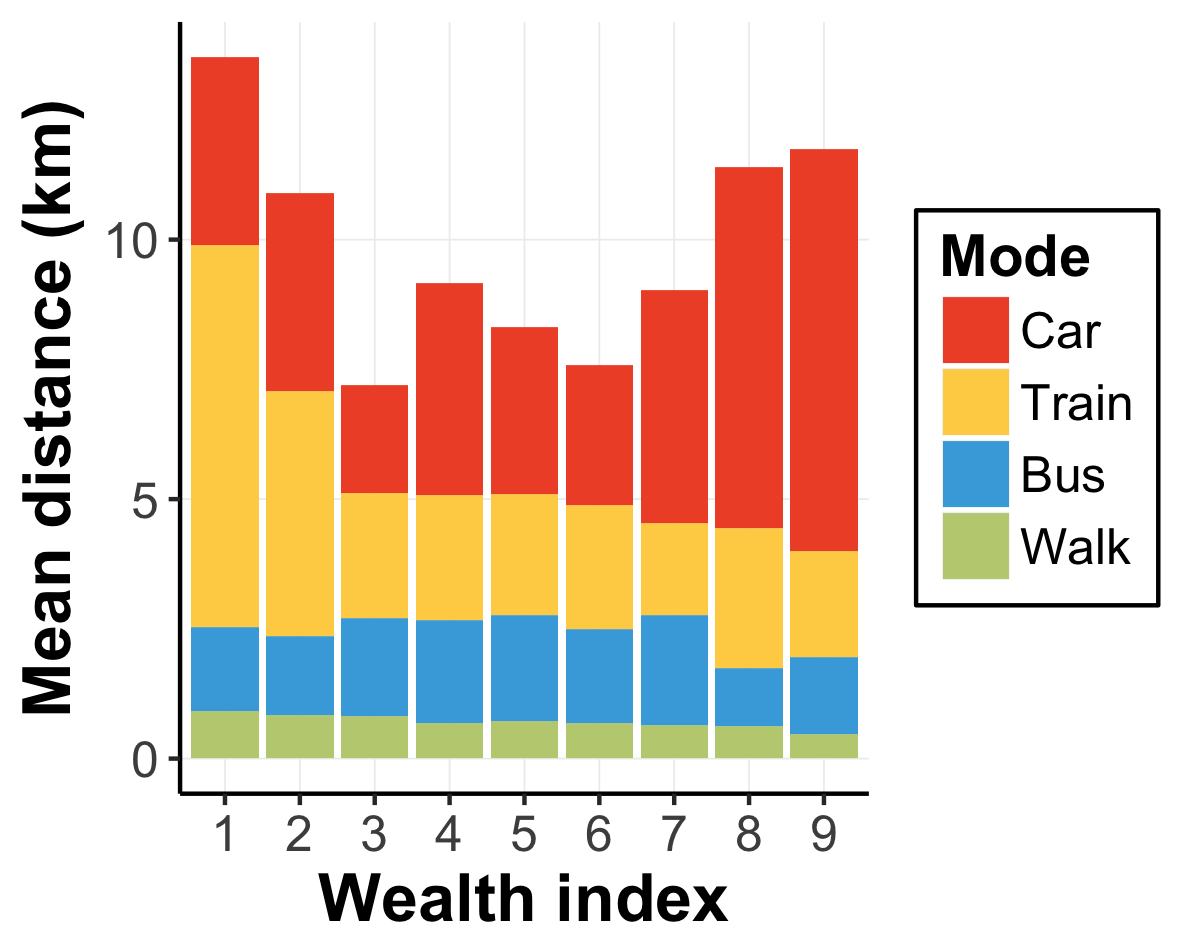}
    \end{minipage}
    \begin{minipage}[c]{0.46\linewidth}
        \centering
        \includegraphics[width=0.95\textwidth]{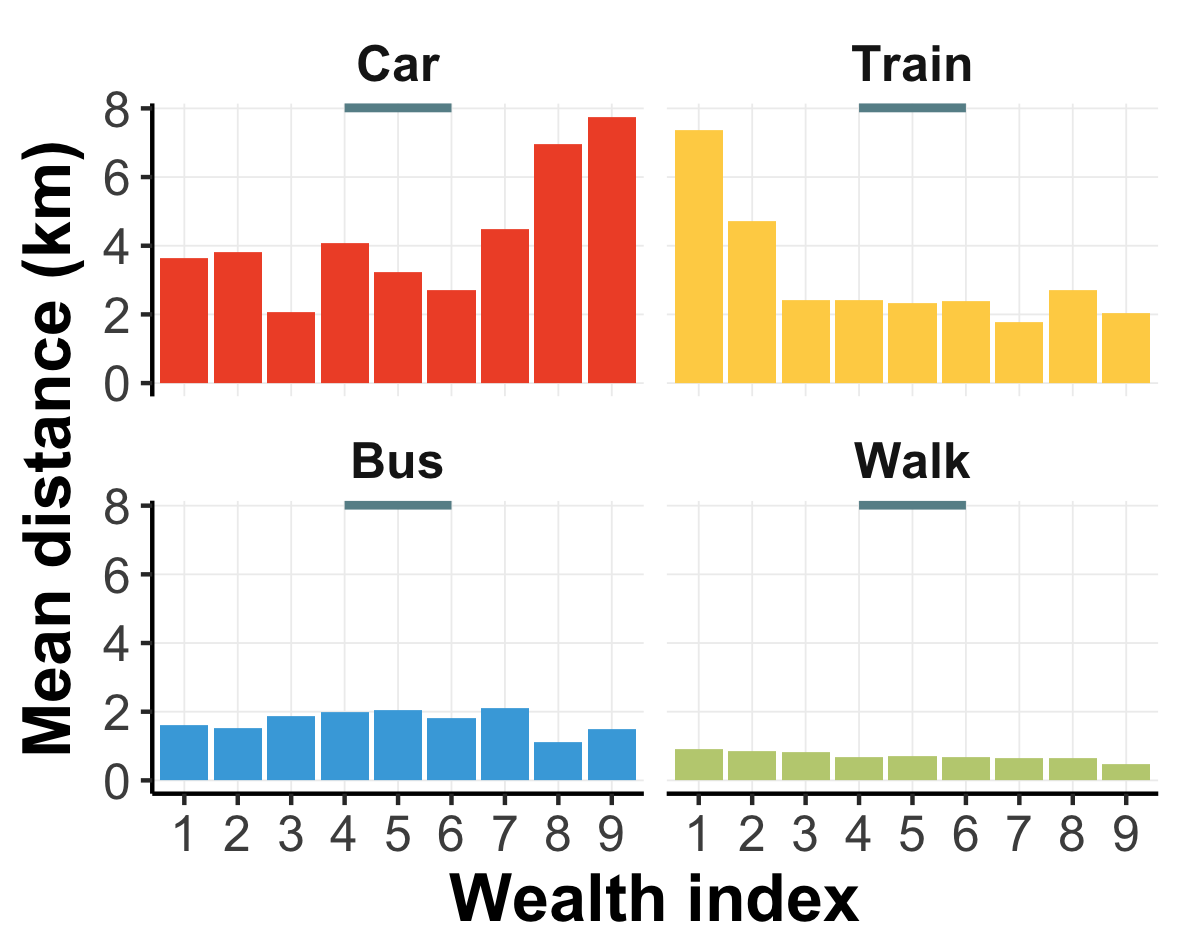}
    \end{minipage}
    \caption{The average trip distance per wealth bracket is presented in the two figures above. \textit{Left:} Average distance in each transport mode per wealth bracket. \textit{Right:} The left plot is split along the transport mode, to show the relative distances logged in each mode across wealth brackets.}
    \label{fig:avgdistpersgm}
\end{figure}

By relying on the sociological literature pertaining to income inequality and Singapore's urban development, we divide the students into 9 wealth brackets based on residence. We then conservatively classify these brackets as low-income, middle-income, and high-income groups \citep{Abeysinghe_AggregateConsumptionPuzzle_2004,Chua_CommunitarianIdeologyDemocracy_1995,trocki_singapore_2006,huff_developmental_1995,Edelstein_HousePricesWealth_2004,Han_PolycentricUrbanDevelopment_2005,Reardon_IncomeInequalityIncome_2011}.
The three groups' homes and schools are plotted in Figure \ref{fig:schperwealth}.
The differences in the means of trip distance and duration are statistically significant among these three groups; and, they lend strong support to the predictions of the Iniquity Theorem. As can be seen in Figure \ref{fig:avgdistpersgm}, when one compares low-income and high-income groups, there is a notable increase in car usage and decrease in the use of walking and public transportation. Because cars are much faster than using bus and walking, the use of cars is associated with a sizable difference in the average duration that students spend in traveling to school (Figure \ref{fig:avgdurpersgm}).

Although students from high-income groups travel a longer distance compared to middle-income groups, this translates into minor differences in travel duration. The opposite is the case when we compare low-income and middle-income groups. These students experience on average 7 to 5 minutes delay compared to middle-income groups, despite the fact that the distance they travel is roughly comparable to high-income groups (Figures \ref{fig:avgdistpersgm} and \ref{fig:avgdurpersgm}). Thus, the Singaporean case---which is an ideal setting to examine the relationship between inequality and transportation delays---offers positive evidence on the Iniquity Theorem. It also provides some lessons on the policies that can be implemented to mitigate the trade-off between efficiency and inequality, as we discuss in the Appendix \ref{sec:data-app}.

\begin{figure}[!ht]
	\centering
	\includegraphics[width=\textwidth]{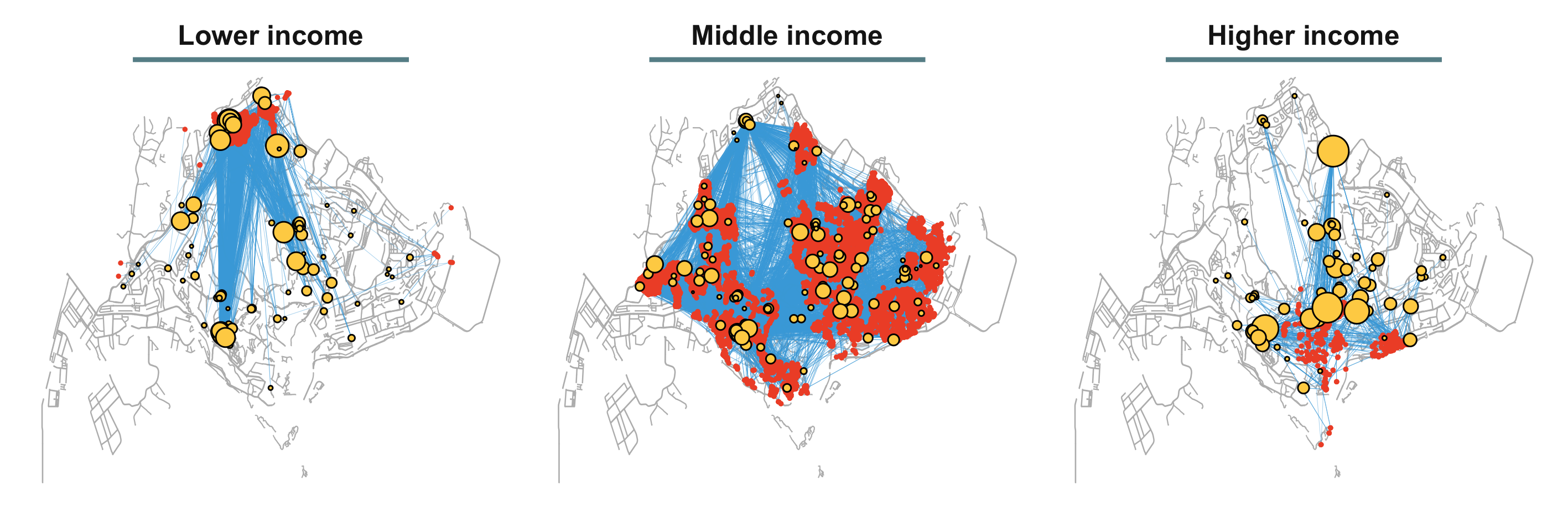}
	\caption{Students are divided in three income groups --- low, middle and high --- using the wealth index obtained from property prices. On the maps of Singapore presented above, the homes of the students are plotted in red and the schools in yellow. The size of the school point reflects the fraction of students attending this particular school. Finally trips are plotted as blue straight lines between the homes and the schools. Further details on the relation between income levels and experienced cost is given in Appendix \ref{sec:data-app}.}
	\label{fig:schperwealth}
\end{figure}

\begin{figure}[!ht]
    \centering
    \begin{minipage}[c]{0.46\linewidth}
        \centering
        \includegraphics[width=0.95\textwidth]{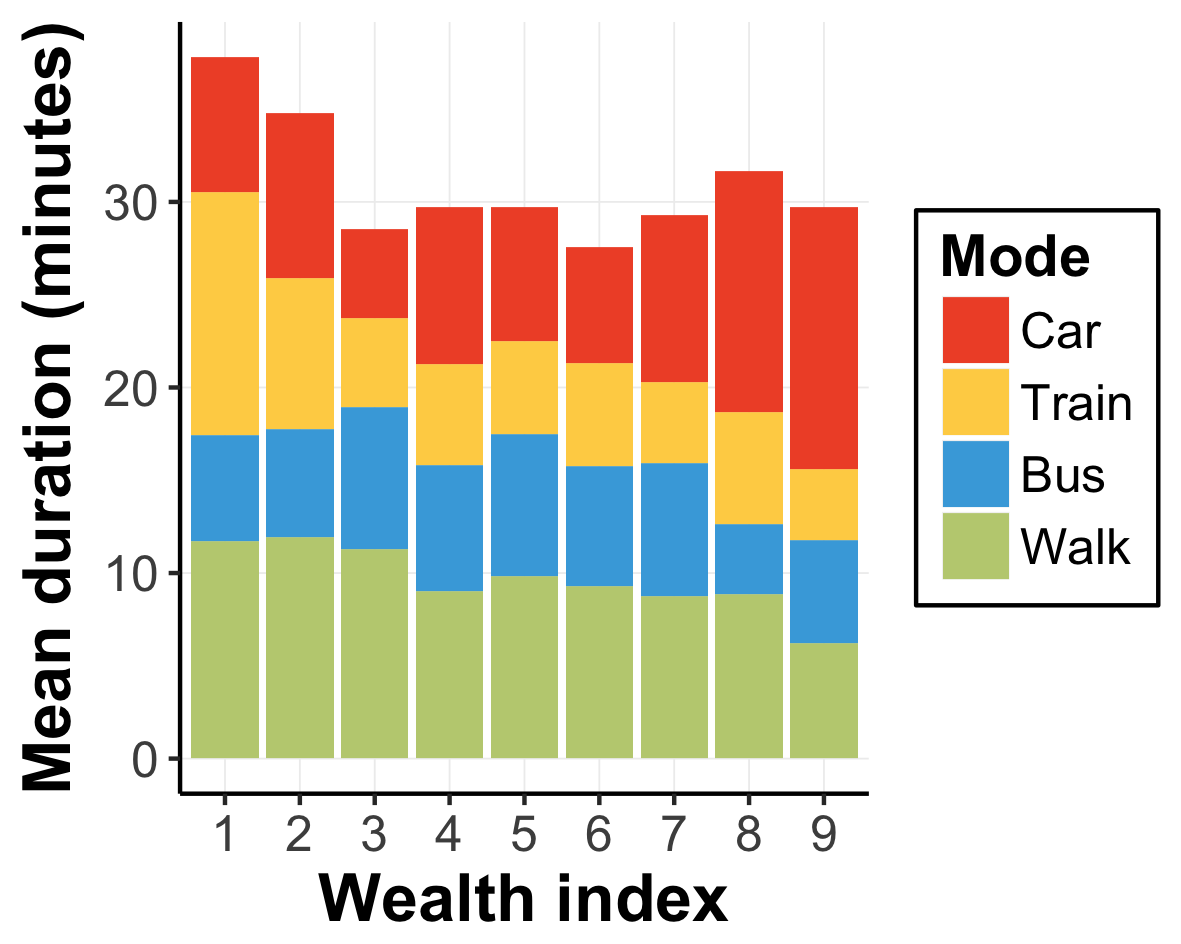}
    \end{minipage}
    \begin{minipage}[c]{0.46\linewidth}
        \centering
        \includegraphics[width=0.95\textwidth]{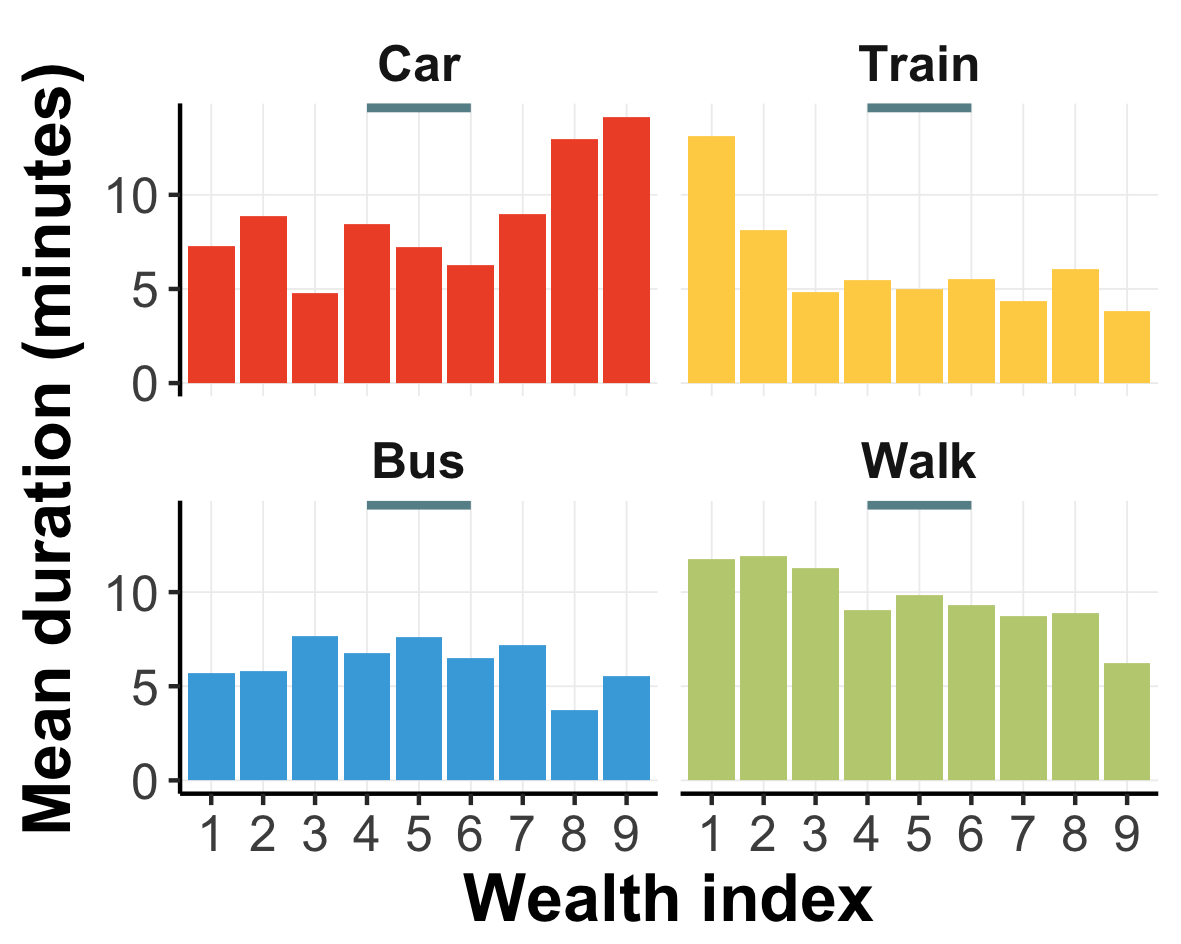}
    \end{minipage}
    \caption{The average trip duration per wealth bracket is presented in the two figures above. \textit{Left:} Average duration in each transport mode. It is notable that brackets 1 and 2 have respectively 7 and 5 minutes more travel time on average than the other brackets. \textit{Right:} The left plot is split along the transport mode, to show the relative durations spent in each mode across wealth brackets. Given that the least-affluent groups spend much more time on the roads compared to middle-income groups, there is a quasi-monotonic increase in the use of car as wealth increases, while we observe that the uses of walking and public transportation decrease as wealth increases.}
    \label{fig:avgdurpersgm}
\end{figure}

\section{Efficiency and Equality: Ramifications for Distributive Justice and Related Questions}
\label{distributive-justice}

The Iniquity Theorem raises important questions pertaining to distributive justice and the efficiency of decentralized decision-making mechanisms such as markets \citep{RawlsJusticeFairnessRestatement2001,RawlsTheoryJustice1999,jencks_does_2002,beckert_market_2002, polanyi_great_1957, gemici_neoclassical_2015}. Namely, if efficiency can be obtained through purposeful intervention but only at a price of increase in inequality, what are the implications of this trade-off for the organization of markets, industries, and society in general? Through a dialogue with contemporary social sciences, we engage only a subset of the rich array of questions raised by the Iniquity Theorem. We thus highlight three potential issues as venues for further research (for a full discussion of these issues, please see the Appendix \ref{sec:justice-app}).

\paragraph{The opportunity cost of iniquity}
One of the fundamental concepts in theories of distributive justice is the idea of fair equality of opportunity, which is the notion that ascriptive factors such as race, socio-economic background, or wealth should not play a decisive role in determining who fails or succeeds in markets and society. A pragmatic justification for this notion comes from the fact that the absence of equal opportunity has costs in terms of economic growth and innovation. The theoretical results in this paper provides a starting point to analyze and compute the costs of iniquity in terms of growth and innovation, topics with important connections to contemporary economics \citep{Benhabib_TradeoffInequalityGrowth_2003,Aghion_InnovationTopIncome_2015,Cingano_TrendsIncomeInequality_2014}.

\paragraph{How does iniquity affect cooperation among members of society?}
There is experimental evidence and a growing body of theoretical research showing that perceptions of fairness have an effect on the likelihood of cooperation among individuals \citep{Colasante_ImpactInequalityCooperation_2014,Fehr_TheoryFairnessCompetition_1999,Fehr_ReciprocalFairnessCooperation_2002}. Although the emergence and evolution of cooperation is a well-studied topic, how general cooperation evolves under a trade-off between efficiency and inequality is still an open question \citep{Axelrod_EvolutionCooperation_1981,Axelrod_FurtherEvolutionCooperation_1988,Bowles_CooperativeSpeciesHuman_2011,Stewart_Collapsecooperationevolving_2014,Press_IteratedPrisonerDilemma_2012}. Thus, we expect the study of cooperation under iniquity through well-understood tools such as Iterated Prisoner's Dilemma to yield new insights into the conditions underlying cooperation in society.

\paragraph{How does iniquity affect the formation of groups and thus cooperation between different groups?}
A related issue has to do with cooperation among different groups and collective actors. It is true that cooperative game theory offers many insights into the conditions under which different groups cooperate \citep{Ray_CoalitionFormation_2014}. However, in line with the research on fairness in experimental economics \citep{Fehr_TheoryFairnessCompetition_1999,Colasante_ImpactInequalityCooperation_2014}, one should expect the formation of groups and the stability of their cooperation to be sensitive to the trade-off between efficiency and inequality. Thus, we highlight coalition formation and cooperation among groups under iniquity as a fruitful avenue for future research.

%

\section{Conclusion}
\label{sec:conclusion}
As cities are expected to house two thirds of the world population by 2030, the dual questions of the efficiency and the fairness of its most basic systems are essential. The field of algorithmic game theory has provided comprehensive results on the first of the two in the general model of congestion games but has so far been relatively silent on the second. There is no dispute that congestion is one of the most pressing issues that ``smart'' cities need to address and indeed, following the rich literature produced on tolling mechanisms and their contribution to increased efficiency, places such as Singapore or London are actively and successfully setting congestion control systems that are variants of these mechanisms.\footnote{Singapore will in fact implement from 2020 an individual continuous tracking system --- called ERP 2.0 --- for all the cars on its roads that will toll users depending on the duration and distance of their trip.}

Our paper contributes to the conversation by proving a general result on the topic of fairness, the Iniquity Theorem, stating that this increased efficiency comes at the cost of a greater inequality among the agents.
Following work may focus on two different threads.

On the one hand, it is possible that results akin to the Iniquity Theorem exist in different economic scenaria. The relation between inequality and efficiency has in fact been discussed extensively in economics, see \citep{baland1997wealth,bardhan2000wealth} for two informative if specialized works. Certain theoretical results (the ``neutrality theorem'') suggest that there is no connection; others that efficiency is favored by inequality (but mostly because the existence of strong firms and small decision groups favors it, not inequality between individuals).  As for the roots of inequality, they have been sought in globalization, intensifying gender and race gap, the decline of unions and the rise of neoliberalism, access to education, the creation of peculiar wealth in the financial industry, ``rent seeking'' (that is to say, influence peddling by the powerful), even in the explosive growth of the IT industry.

On the other hand, our community may focus on the redistributive aspects of its mechanisms. Numerous cities have the stated intention of channeling the money collected via tolls into the operation of its public transport systems, such as the Move NY plan debated in New York City. This thread may inch closer to the question of network design and could effectively reverse the consequence of inequality after tolls are levied. Future work could include designing objective functions similar to that used in our trade-off algorithm that include the utility of improved infrastructure --- namely, public transportation --- for lower income citizens and provide intuition on optimal tolling mechanisms to balance efficiency and fairness.


\subsubsection*{Acknowledgements}
Kurtulu\c{s} Gemici acknowledges NUS Strategic Research Grant (WBS: R-109-000-183-646) awarded to Global Production Networks Centre (GPN@NUS).
Elias Koutsoupias acknowledges
 ERC Advanced Grant 321171 (ALGAME).
Barnab\'e Monnot
acknowledges the SUTD Presidential Graduate Fellowship.
 Christos Papadimitriou acknowledges NSF grant 1408635 ``Algorithmic Explorations of Networks, Markets, Evolution, and the Brain".
Georgios Piliouras
acknowledges
SUTD grant SRG ESD 2015 097, MOE AcRF Tier 2 Grant 2016-T2-1-170, NRF grant NRF2016NCR-NCR002-028 and a NRF fellowship.

Barnab\'e Monnot and Georgios Piliouras would like to thank the other members of the National Science Experiment team at SUTD: Garvit Bansal, Francisco Benita, Sarah Nadiawati, Hugh Tay Keng Liang, Nils Ole Tippenhauer, Bige Tunçer, Darshan Virupashka, Erik Wilhelm and Yuren Zhou. The National Science Experiment is supported by the Singapore National Research Foundation (NRF), Grant RGNRF1402.

\bibliographystyle{apalike}
\bibliography{refer}

\newpage
\appendix
\section{Appendix}
\label{sec:appendix}
\subsection{Iniquity in the asymmetric case}
\label{sec:asymmetric}
Does the Iniquity Theorem hold for the asymmetric nonatomic routing games? In general, the answer is no, with a few caveats. First, asymmetric routing games, by their very nature, introduce populations that are not comparable: they may have different source and destinations as well as different available paths to reach one from the other. As shown in the remainder of this Appendix, it is then not possible to find a strict equivalent to the Iniquity Theorem in the asymmetric case. But we are perhaps asking too much: all counterexamples directly relate to the fact that the subpopulations in different commodities are not always comparable.

There is a way out: \textit{among each subpopulation, the inequality worsens.} This result is a straightforward application of the Iniquity Theorem in the symmetric case for each commodity. As tolls are introduced, users in each commodity polarize such that higher incomes use tolled roads, following the construction presented in Section \ref{sec:inequality}.

This Appendix shows how two constructions of measuring inequality fail to give a strict equivalent to the Iniquity Theorem in the asymmetric case. Along the way, we provide some intuition hinting to why this should be the case.

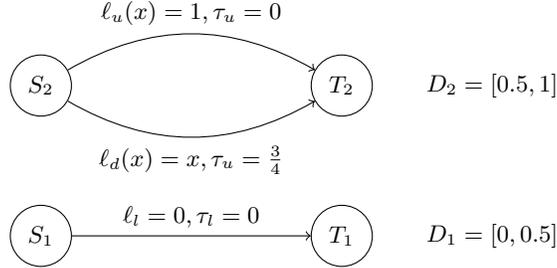
\begin{figure}
  \centering
    \begin{tikzpicture}
      \node[state] (1) at (0, 0) {\(S_2\)};
      \node[state] (2) at (4, 0) {\(T_2\)};
      \node[state] (3) at (0, -2) {\(S_1\)};
      \node[state] (4) at (4, -2) {\(T_1\)};
      \node at (6, 0) {\( D_2 = [0.5, 1] \)};
      \node at (6, -2) {\( D_1 = [0, 0.5] \)};
      \draw[->] (1) edge [bend left] node [above] {$\latency{u}(x)=1, \toll_u = 0 $} (2);
      \draw[->] (1) edge [bend right] node [below] {$\latency{d}(x)=x, \toll_u = \frac{3}{4} $} (2);
      \draw[->] (3) edge node [above] {\( \latency{l} = 0, \toll_l = 0 \)}(4);
    \end{tikzpicture}
  \caption{A multicommodity example: the interval \( D_1 \) of agents with income between 0 and 0.5 wishes to go from \( S_1 \) to \( T_1 \), while the interval \( D_2 \) goes from \( S_2 \) to \( T_2 \). Tolls and latencies are given along the edges.}
  \label{fig:pigouasym}
\end{figure}

\paragraph{First approach}
In this paragraph, the same approach as the symmetric case is employed: agents follow an \textit{ex ante} income distribution \( \ii \) and incur a cost due to playing the asymmetric game \( \Gamma \), recorded in the \textit{ex post} distribution \( \epii = \ii - \alpha \cdot \cost^F \).

An asymmetric nonatomic routing game is presented in Figure \ref{fig:pigouasym}. Agents follow income distribution \( \ii(x) = x \). This game has the property that the Gini improves \textit{after} the game is played (provided \( \alpha \) is small enough that resulting incomes remain nonnegative). Intuitively, it is always possible to construct an example where a mass of players with lower income has zero cost in the game while players with higher income have a positive cost, thus artificially bringing down the value of the Gini coefficient by reducing the gap between lower and higher incomes.

\paragraph{Second approach}
One possible alternative to find a meaningful interpretation of the Iniquity Theorem in the asymmetric case is to study the following:
\begin{itemize}
  \item First we look at the Nash Equilibrium of the game with no income effect, i.e. the edge cost functions have the familiar shape of \( f_e(z, \toll_e) = \latency{e}(z) + \toll_e \), for some latency functions \( \latency{e} \). Call \( \ii_0 \) the resulting income distribution after the game is played and agents' income is \( \ii_0(x) = \ii(x) - \alpha \cdot \cost^F(x) \).
  \item Second, the game is played with income effects as defined by the edge cost functions follwoing assumptions~\eqref{eq:edgecf}. Call now \( \epii \) the resulting income distribution.
\end{itemize}

This would fix the example proposed in Figure \ref{fig:pigouasym}. Here, \( \ii_0 \) is such that half of the agents are playing the lower game and incurring a cost of 0, while the other half is split equally between players on the bottom link and the top link incurring a cost of 1. In \( \epii \), the bottom half of players still incurs a cost of 0. However, the distribution changes for incomes 0.5 and above, as only \( 1-\frac{\sqrt{3}}{2} < \frac{1}{4} \) of the richest agents will use the link with a toll and incur a cost strictly smaller than 1. We find here a situation similar to the Iniquity Theorem in the symmetric case and can show that the Gini coefficient is indeed worse from \( \ii_0 \) to \( \epii \).

However this example only works because the flow of one population does not affect the other population, and so in a sense is not general enough. Before showing a counterexample where two populations interact and the Gini coefficient is improved from \( \ii_0 \) to \( \epii \), we give some intuition explaining how this second approach relates to the symmetric case studied in Section \ref{sec:inequality}. More precisely, we show that looking at the gap between \( \ii_0 \) and \( \epii \) in the symmetric case is strictly equivalent to considering \( \ii \) and \( \epii \). This gives us a sense that this alternative approach is satisfying because it does reduce to the Iniquity Theorem in the symmetric case.

\paragraph{Relation to the symmetric case:} The asymmetric case is such that we have \( K \) commodities routed through the network between \( K \) origin-destination pairs \( (s_k, t_k)_k \), instead of 1 commodity. We want to show that if the Iniquity Theorem holds for the asymmetric case as described above, it would automatically hold for the symmetric case. We give in Table \ref{tab:asymdiff} the differences between the two cases.

\begin{table}
  \centering
  \begin{tabular}{|p{2cm}|p{5cm}|p{5cm}|}
    \hline
      & Ex ante distribution & Ex post distribution\\
    \hline
    Symmetric case & Original income distribution \( \ii \) & \( \epii = q - \alpha \cdot \cost^F \) for game with income \\
    \hline
    Asymmetric case & Income distribution in NE of game with no income \( \ii_0 \) & \( \epii = q - \alpha \cdot \cost^F \) for game with income \\
    \hline
  \end{tabular}
  \caption{Differences between the symmetric and asymmetric case}
  \label{tab:asymdiff}
\end{table}

The Iniquity Theorem in the symmetric case states that moving from \( \ii \) to \( \epii \) is worse from the point of view of the Gini coefficient. We want to show that moving from \( \ii \) to \( \epii \) is worse from the point of view of the Gini coefficient if and only if moving from \( \ii_0 \) to \( \epii \) is worse from the point of view of the Gini coefficient. In other words, for the symmetric case, the result holds whether we use \( \ii \) or \( \ii_0 \) as our \textit{ex ante} distribution.

The proof is the following: in the Nash equilibrium of the game with no income and only one commodity, it is easy to see that all players will incur the exact same cost \( C \). Therefore, the original income distribution \( \ii \) and the income distribution in the NE of the game with no income \( \ii_0 \) are equal up to a translation of \( C \): \( \ii_0 = q - C \). By rescaling this distribution appropriately so that it has equal mean to the distribution \( \epii \), we can compare them and see that \( G(\epii) \geq G(\ii_0) \). This result is plotted in Figure \ref{fig:symmcase}.

\begin{figure}
  \centering
  \begin{tikzpicture}
    \definecolor{myred}{rgb} {0.941,0.3294,0.1921}
    \definecolor{myblue}{rgb} {0.2705,0.6627,0.8705}
    \definecolor{mygreen}{rgb} {0.7490,0.8078,0.5019}
    \tikzstyle{nome}=[anchor=west, minimum height=\altura,minimum width=2cm,text width=1.8cm]
    \draw[thick,->] (0,0) -- (0,4);
    \draw[thick,->] (0,0) -- (4,0);
    \draw[thick,myblue] (0,1) -- (4,3);
    \draw[thick,myred] (0,0.75) .. controls (0,2.75) and (4,0.75) .. (4,2.75);
    \draw[thick,mygreen] (0,0.5) -- (4,2.5);
    \draw[thick,->] (1,1.4) -- (1,1.1);
    \draw[thick,->] (3,2.4) -- (3,2.1);
    \node[myblue,nome] at (4.2, 3.2) {\textit{Ex ante} \( \ii \)};
    \node[myred,nome] at (4.2, 2.75) {\textit{Ex post} \( \epii \)};
    \node[mygreen,nome] at (4.2, 2.3) {\textit{Ex ante} \( \ii_0 \)};
  \end{tikzpicture}
  \caption{Using \( \ii \) or \( \ii_0 \) as \textit{ex ante} distribution in the Iniquity Theorem for the symmetric case does not change the result, since the shape of the quantile function matters for comparing two distributions with equal means.}
  \label{fig:symmcase}
\end{figure}
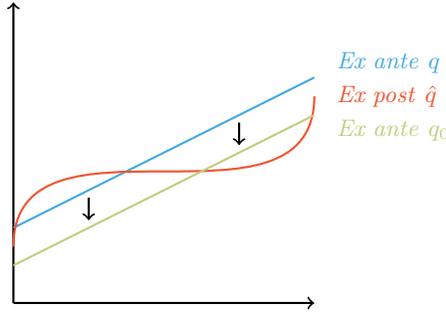

\paragraph{Introduction of \( \text{ }\Gamma_2 \)}
Considering the gap between \( \ii_0 \) and \( \epii \) is thus equivalent in the symmetric case. We now show that there are natural counterexamples that lead to an improved Gini coefficient for \( \epii \) and so the Iniquity Theorem does not hold again for the asymmetric case in this alternative.

Introduce game \( \Gamma_2 \) of players in the interval \( [0, 1] \) with income \( \ii(x) = x \). There is a value \( x^* \in [0, 1] \), such that all players \( x \leq x^* \) make up the first commodity going from \( s_1 \) to \( t_1 \) and all players \( x \geq x^* \) make up the second commodity going from \( s_2 \) to \( t_2 \). The network is given in Figure \ref{fig:pigouasym2}.

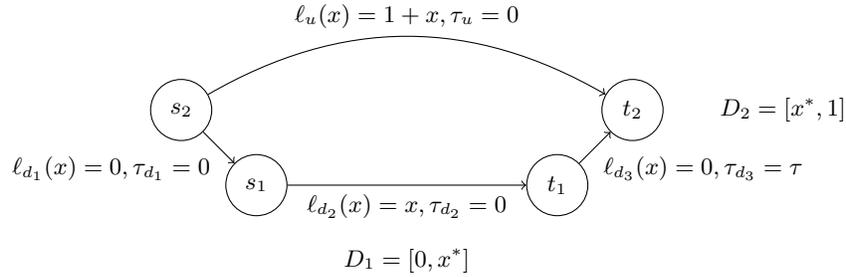
\begin{figure}
  \centering
    \begin{tikzpicture}
      \node[state] (1) at (0, 0) {\(s_2\)};
      \node[state] (2) at (6, 0) {\(t_2\)};
      \node[state] (3) at (1, -1) {\(s_1\)};
      \node[state] (4) at (5, -1) {\(t_1\)};
      \node at (8, 0) {\( D_2 = [x^*, 1] \)};
      \node at (3, -2) {\( D_1 = [0, x^*] \)};
      \draw[->] (1) edge [bend left] node [above] {$\latency{u}(x)=1+x, \toll_u = 0 $} (2);
      \draw[->] (1) edge node [below left] {$\latency{d_1}(x)=0, \toll_{d_1} = 0 $} (3);
      \draw[->] (3) edge node [below] {$\latency{d_2}(x)=x, \toll_{d_2} = 0 $} (4);
      \draw[->] (4) edge node [below right] {$\latency{d_3}(x)=0, \toll_{d_3} = \toll $} (2);
    \end{tikzpicture}
  \caption{A multicommodity example: the interval \( D_1 \) of agents with income between 0 and \( x^* \) wishes to go from \( S_1 \) to \( T_1 \), while the interval \( D_2 \) goes from \( S_2 \) to \( T_2 \). Tolls and latencies are given along the edges.}
  \label{fig:pigouasym2}
\end{figure}

The main idea behind this example is to alleviate the load for the first type of agents in \( \epii \) by sending some agents from the second commodity to the upper edge, with the toll \( \tau \) as control variable. It has a very close relationship with \textit{subgroup decomposability} that we introduce in the next paragraph before expanding on the counterexample.

\paragraph{Subgroup decomposability} Most income inequality measures satisfy the four properties of scale and population independence, anonymity and the transfer principle. An additional property, \textit{subgroup decomposability}, is verified by some inequality measures, including the Atkinson index, but not by others, such as Gini.

The subgroup decomposability gives us an intuition as to why the Iniquity Theorem may fail in the asymmetric case. For the Atkinson index, it states that if our sample dataset is split in two, the index of the whole population may be computed as a weighted sum of the indices of the subset populations plus the index of the sub-population means.
\[
  A(y \cup z) = w_y A(y) + w_z A(z) + A(\mu_y, \mu_z)
\]

In the asymmetric case, our subpopulations are each group of agents \( D_k \) in one commodity \( k \). We know, by the symmetric case, that within each population, the inequality worsens, so \( A(\hat{D}_k) \geq A(D_k) \) if \( \hat{D}_k \) is the distribution of incomes in commodity group \( k \). However, this does not directly imply that \( A(\bigcup_{k \in K} \hat{D}_k) \geq A(\bigcup_{k \in K} D_k) \), since the additional \( A(\mu_1, \dots, \mu_K) \) term may have moved one way or the other. In other words, within each subgroup the inequality may have worsened, but in the overall population, it is possible that averages of each group become more equal and thus better the social inequality. Though the Gini coefficient does not satisfy subgroup decomposability, it is this line of thought that the counterexample mines.

\paragraph{Solution of \( \text{ }\Gamma_2 \)}
Set \( \alpha = 0.01 \). First, in the Nash Equilibrium of the game without tolls, all agents use links \( (d_1, d_2, d_3) \) and incur cost \( x \cdot 1 \), their income times the latency.

We now want to set \( \toll \) such that flow \( D_2 \) is further split in two: \( f \) units are sent to the upper path \( u \) while \( 1 - x^* - f \) are sent to the lower path \( d \). Call \( h^* \) the income quantile of the agent at \( x^* + f \). This income quantile is indifferent between going to the upper or the lower path, and thus is a solution to
\begin{align*}
  \cost_u(x) = \cost_d(x) & \Leftrightarrow & x(1+f) = x(x^*+1-x^*-f) + \toll \\
    & \Leftrightarrow & x + x f = x - x f + \toll \\
    & \Leftrightarrow & \toll = 2xf
\end{align*}

This yields a closed form for the toll \( \toll = 2 h^* (h^* - x^*) \) that depends on \( h^* \) and \( x^* \) alone. Note that we are not trying to obtain the optimal toll, but simply the toll \( \toll \) that divides population \( D_2 \) along a prescribed flow \( (f, 1-x^*-f) \).
We focus on the flow that is such that \( D_2 \) is split exactly in half, i.e. \( h^* = \frac{1+x^*}{2} \), and find that for a range of values \( x^* \), the Gini coefficient of \( \epii \) is indeed lower than that of \( \ii_0 \), which implies that the inequality is higher in \( \ii_0 \). We give in Figure \ref{fig:asym2gini} a graph showing the Gini coefficient for \( \ii_0 \) and \( \epii \) for different values of \( x^* \).

However, as prescribed by the Iniquity Theorem in the symmetric case, restricting our view to population \( D_2 \) alone, the Gini coefficient has indeed worsened. We present the difference \( G(\epii) - G(\ii_0) \) in Figure \ref{fig:asym2diff}.

\begin{figure}[!ht]
	\centering
	\begin{minipage}[t]{0.42\linewidth}
		\centering
		\includegraphics[width=0.95\textwidth]{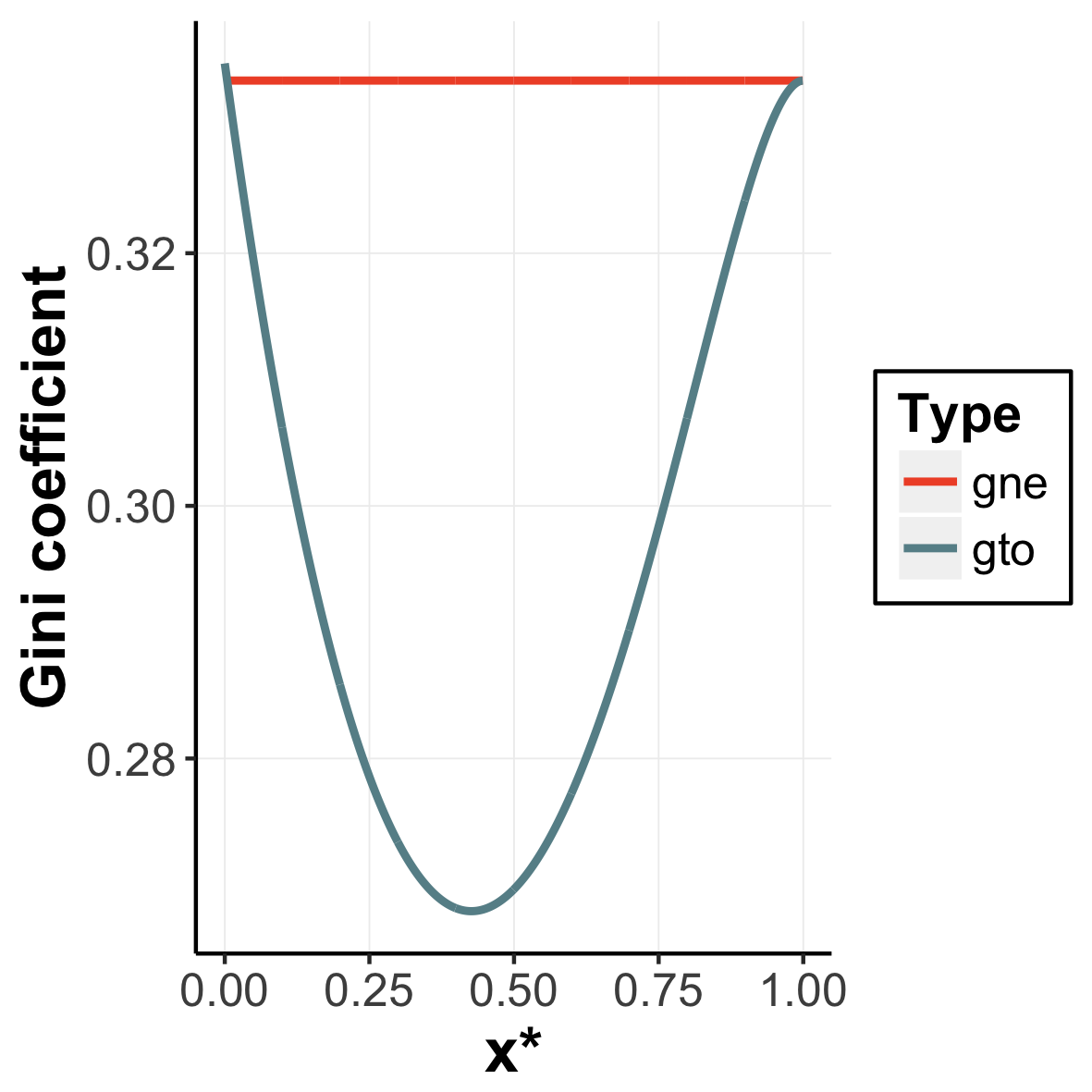}
    \caption{Line plot of \( G(\epii) \) and \( G(\ii_0) \) along \( x^* \). For most values of \( x^* \) the Gini improves from \( \ii_0 \) to \( \epii \).}
    \label{fig:asym2gini}
	\end{minipage}
  \begin{minipage}[t]{0.1\linewidth}
    \begin{tikzpicture}
    \end{tikzpicture}
  \end{minipage}
	\begin{minipage}[t]{0.42\linewidth}
		\centering
		\includegraphics[width=0.95\textwidth]{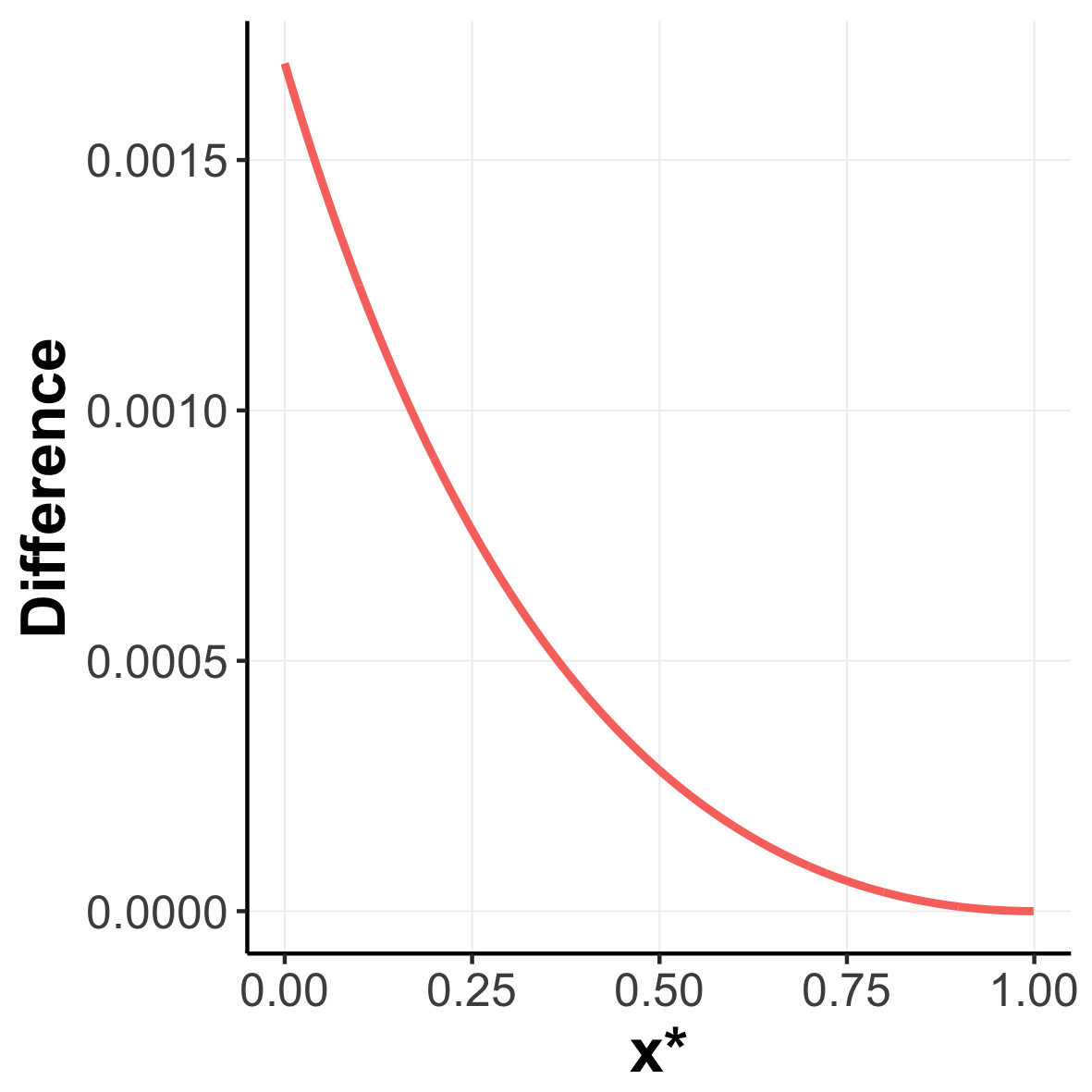}
    \caption{Difference between \( G(\epii) \) and \( G(\ii_0) \) along \( x^* \) restricted to population \( D_2 \). The difference is always positive which indicates that the inequality within population \( D_2 \) has increased.}
    \label{fig:asym2diff}
	\end{minipage}
\end{figure}

\subsection{Computing the efficiency-equality trade-off}
\label{sec:tradeoff}
\paragraph{The model}
Because of the computational nature of this section, and for the sake of simplicity,
we will stick to a simplified, discrete model.  Very few of these simplifications are crucial.
We assume a population whose income is presented in $n$ {\em quantiles} $\ii_1,\ldots,\ii_n$,
where $\ii_1$ stands for the average income of the lowest $1\over n$ of the population
--- if $n$ were $100$, these would be the income percentiles.

We have $K$ parallel links ---
we assume that $K$ is fixed.  Each link $e$ has a delay function $f_e(x)$ which we assume
for simplicity to be piecewise constant with increments at values of $x$ that are multiples of
$1\over n$ (so that each link accommodates full quantiles), and that the delays have integer
values in the set $[D]$, where $D$ is the maximum delay.
Evidently, the problem is one of allocating each quantile to a link, and imposing
appropriate tolls.  It is easy to see that at equilibrium each link will be assigned a
{\em contiguous} set of quantiles.

\paragraph{The objective}
We seek to optimize a trade-off between efficiency and equality,
that is to say, a weighted sum of total delay and the Gini coefficient resulting from
this game, say of the form ``minimize total delay + $\lambda$ times the Gini after the game,''
where $\lambda>0$ is the relative importance of equality over efficiency.

It is important to note that the Gini coefficient before the game is in this case captured by (ignoring
additive terms and a factor of $-2\over n$)
$$\sum_{i=1}^n (n+1-i)\ii_i\over \sum_{i=1}^n \ii_i.$$
This is on account of the fact that, in the sum that approximates the double integral in Equation \eqref{eq:lorenz},
the lowest quantile appears $n$ times, the second lowest $n-1$, etc.

After the game imputes a cost to the $i$th quantile, the Gini coefficient is captured by
$$\cdot{\sum_{i=1}^n (n+1-i)(\ii_i-\delta_i)\over \sum_{i=1}^n (\ii_i-\delta_i)},$$
where $\delta_i= \ii_i d_i + \toll_i$ is the cost of the equilibrium to the $i$th percentile,
and $d_i$ is the delay and $\toll_i$ is the toll incurred by the $i$th quantile.   Now, since
it is reasonable to assume that $\delta_i<<\ii_i$, this quantity can be adequately
represented by its numerator divided by the sum of the $\ii_i$'s\footnote{For more accuracy, the
computed value of $\sum_i \delta_i$, can be plugged in here and repeat the computation.}.
Thus, omitting constant terms (it is important to recall that the
$q_i$'s {\em are} constant), both additive and multiplicative, we conclude that what is minimized is
a linear function of the delays $d_i$ and the tolls $\toll_i$.  Adding to them the total
delay\footnote{Note that even the {\em total weighted delay} $\sum_i \ii_i d_i$ can be similarly
accommodated as part of the trade-off.}, we conclude that the objective is of the form
$$\min_{{\rm allocation\ of\ quantiles\ to\ links}} \sum_{i=1}^n (\alpha_i d_i + \beta_i \toll_i),$$
for some known positive parameters $\alpha_i$, $\beta_i$.

\paragraph{The algorithm}
The algorithm is dynamic programming; namely, we compute the quantity cost$[S,m,d]$
with $S\subseteq [K]$, $m\leq n$, and $d\leq D$, which is the smallest value of the
objective that can be achieved by allocating the {\em lowest} $m$ percentiles to the set
$S$ of links (in the optimum order) with the (largest) delay of the $m$-th percentile equal to $d$. The algorithm is presented in Algorithm \ref{alg:tradeoff}.

\begin{algorithm}
    \SetAlgoNoLine
    \DontPrintSemicolon
    \KwData{Calculate the values cost$[\{e\},m,d]$ for all links $e, m\in [n], d\in [D]$}
    \Begin{\For{\( s \gets 2 \) \KwTo \( K \)}{
        \For{All sets $S\subseteq L$ with $|S|=K$}{
            \For{\( m \gets 1 \) \KwTo \( n \)}{
                \For{\( d \gets 1 \) \KwTo \( D \)}{
                    cost$[S, m, d] = \min_{e \in S,\, r < m:\, \ell_e(r) = d;\, d' \leq d}$ cost$[S-\{e\}, m-r, d']$

                    \quad\quad\quad $+ \sum_{j=m-r+1}^m (\alpha_j d' + \beta_j t(d,d',r,m-r))$
                }
            }
        }
    }}
\caption{A dynamic programming algorithm to compute the trade-off between efficiency and equality.}
\label{alg:tradeoff}
\end{algorithm}

By $\toll(d,d',r,m-r)$ we denote the toll required to equalize, for the $m-r$th quantile, the delay $d'$ with the greater delay $d$.  In conclusion (here $D^*\leq n$ is the number of different values of the delay in the network):

\begin{theorem}
    \label{thm:tradeoff}
The optimum trade-off between total delay and the Gini coefficient can be computed in time $O(nD^*)$
\end{theorem}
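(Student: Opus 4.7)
The plan is to reduce the optimisation to a combinatorial problem over assignments of contiguous income blocks to parallel links with a linear objective in per-quantile delays and tolls, to establish a structural equilibrium lemma that makes the state space of the dynamic programme small, to verify the Bellman recurrence, and finally to account for the running time.

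First, I would linearise the objective. Starting from the target ``total delay $+ \lambda \cdot \Gini(\epii)$'' and using $\delta_i = \ii_i d_i + \toll_i \ll \ii_i$, the denominator of the Gini expression collapses, to first order, to the constant $\sum_i \ii_i$; the numerator is linear in the $\delta_i$'s; and the total delay is already linear. Collecting terms yields the stated form $\sum_{i=1}^n (\alpha_i d_i + \beta_i \toll_i)$ with coefficients depending only on $\lambda$ and the fixed income vector $(\ii_i)_i$.

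Second, I would establish the structural lemma that under the canonical cost $\ii(x) \latency{e}(c^F(e)) + \toll_e$ on parallel links, every Wardrop equilibrium partitions the quantiles into contiguous intervals, with the monotone property that smaller-delay, higher-toll links are occupied by richer quantiles. This follows from the indifference condition between any two links, whose threshold income is $(\toll_1 - \toll_2)/(\latency{2} - \latency{1})$ and hence monotone in $\ii$. As a consequence, once the sequence of (link, block-size, delay) triples along the income axis is fixed, the tolls are pinned down by the telescoping chain of boundary indifferences (one toll, e.g.\ on the highest-delay link, may be normalised to zero). The quantity $\toll(d, d', r, m-r)$ used in the algorithm is exactly this local equalisation toll at a boundary where a link of delay $d$ meets a link of delay $d'$ at the $(m-r)$-th quantile.

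For correctness of the recurrence, the state cost$[S, m, d]$ represents the optimal cost of serving the poorest $m$ quantiles using the link set $S$ when the boundary link adjacent to the still-unplaced (richer) quantiles has delay $d$. The transition picks a link $e \in S$ together with the number $r$ of quantiles it will absorb at delay $d = \latency{e}(r)$, and recurses into cost$[S \setminus \{e\}, m-r, d']$ with $d'$ the delay on the next boundary. Because the linearised objective is quantile-additive and, by the structural lemma, the toll on $e$ depends only on the local data $(d, d', r, m-r)$ rather than on the earlier allocation, Bellman's principle applies and the recurrence is optimal.

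The main obstacle is verifying that the state $(S, m, d)$ is indeed a sufficient summary of the past: one must check that, under the telescoping toll normalisation, no quantity relevant to the future cost of placing higher-income quantiles depends on how the poorer quantiles are split across the links of $S$ beyond the single boundary delay $d$. Once this is confirmed, the running-time bound follows by counting states and transitions---$O(2^K n D)$ states each filled in $O(KnD)$ time---collapsing redundant delay values into the $D^* \leq n$ ones that actually arise on the links, and treating $K$ as a fixed constant, to obtain the claimed $O(n D^*)$ dependence.
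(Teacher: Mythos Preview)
Your proposal is correct and follows essentially the same route as the paper: linearise the Gini-based objective to a quantile-additive sum $\sum_i(\alpha_i d_i+\beta_i\toll_i)$, use the contiguous-block structural lemma for parallel-link equilibria, and run the dynamic programme on states $\mathrm{cost}[S,m,d]$ with the boundary-indifference tolls $\toll(d,d',r,m-r)$. If anything, you spell out more carefully than the paper does why $(S,m,d)$ is a sufficient summary; your running-time accounting is about as informal as the paper's (which simply asserts the $O(nD^*)$ bound with the $K^2 2^K$ hidden constant), so be sure to note the prefix-sum precomputation and the restriction of $r$ via $\ell_e(r)=d$ when you tighten that step.
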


But of course, the $O$-notation hides the constant $K^2 2^K$.

\subsection{Iniquity in the Pigou example for cost~\eqref{eq:cf1}}
\label{sec:pigoucf1}

If instead we focus on the perceived latency $\cost(x)/2x$ (actual
latency plus tolls over income) as per Equation~\eqref{eq:cf1}, we get the associated social
perceived latency
$\int_0^1 \cost(x)/2x\, dx=1 - \s+ \s^2 - \s^2 \ln \s$, shown in
Figure~\ref{fig:Pigou-social-latency}. The perceived latency is
similar to social cost, but distorted because we divide the social
cost of each participant by their income.

\begin{figure}
  \centering
  \begin{minipage}[c]{0.46\textwidth}
		\centering
		\includegraphics[width=.95\textwidth]{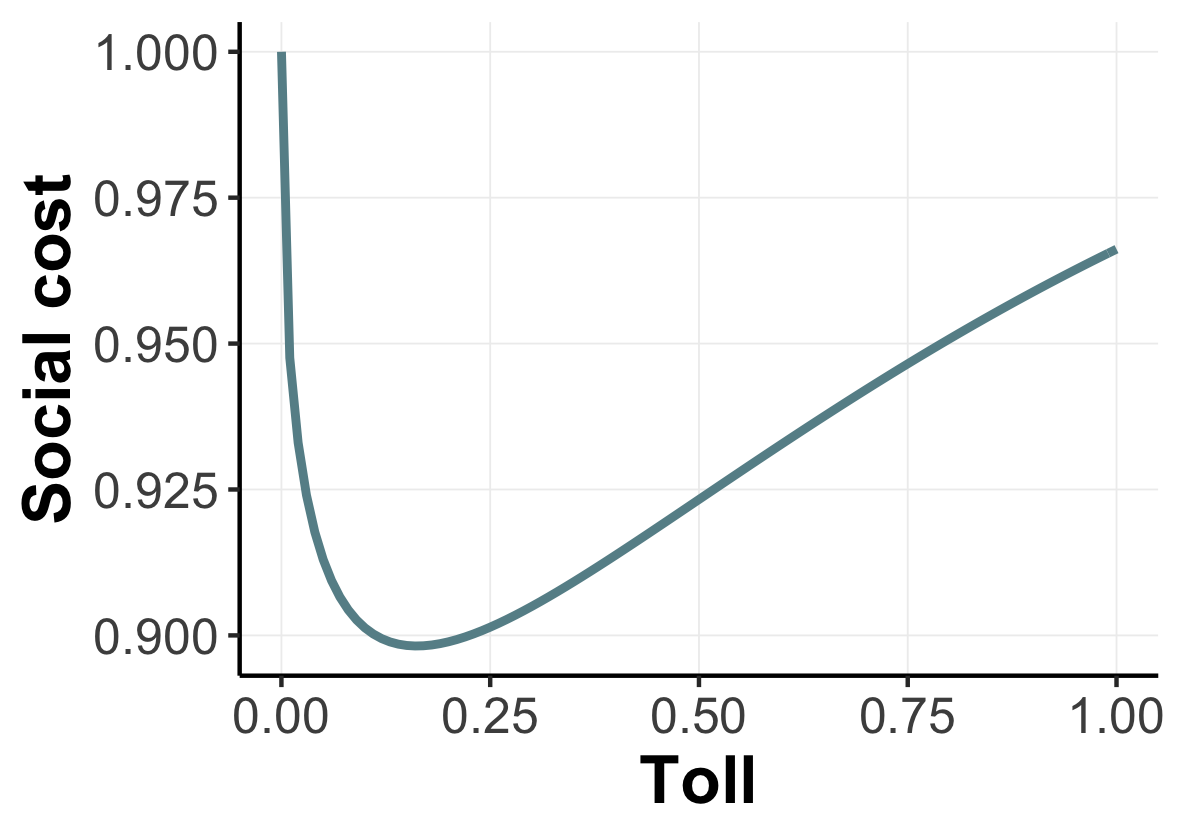}
	\end{minipage}
	\begin{minipage}[c]{0.46\textwidth}
    \centering
    \includegraphics[width=.95\textwidth]{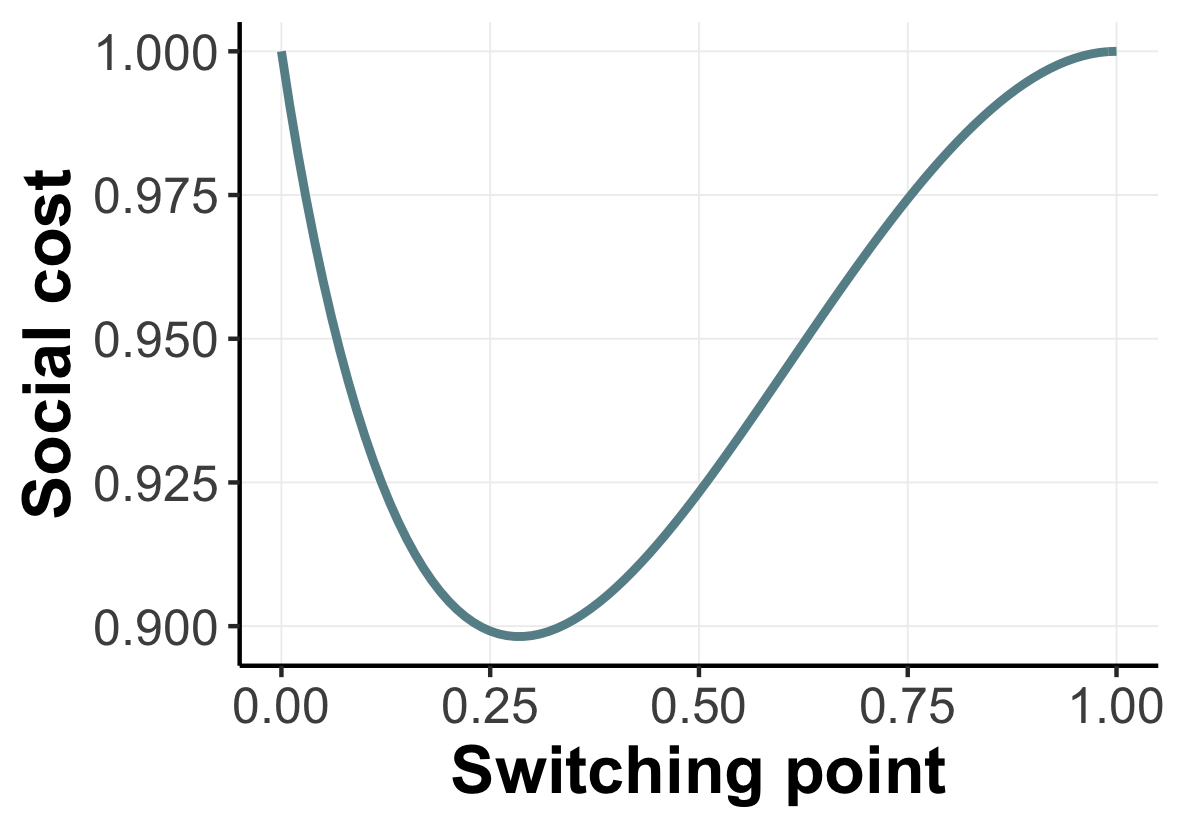}
  \end{minipage}
  \caption{The perceived latency (time wasted in transportation plus
    tolls over income) of the Pigou network with income
    distribution $\ii(x)=2x$, as a function of the tolls (left) and switching
    point $\s$ (right).
  }
  \label{fig:Pigou-social-latency}
\end{figure}

\subsection{Tolls and Inequality: Empirical Findings}
\label{sec:data-app}

We tap into a remarkable dataset from Singapore, the National Science Experiment (NSE), to examine whether the theoretical predictions of the Iniquity Theorem are empirically supported. Singapore, a postcolonial country with a rapid trajectory of economic development since independence \citep{huff_developmental_1995, trocki_singapore_2006}, provides an ideal setting to analyze the effects of tolls. The city-state is one of the most affluent countries in the world; and, it is also the third most densely-populated country after Macao SAR and Monaco.%
\footnote{The World Bank, World Development Indicators, 2018.} %
However, despite being a densely populated country, Singapore avoids congestion on surface streets and highways. Several policies achieve such an outcome. First, the city-state has a modern and well-designed road network. Second, the Singaporean state discourages car ownership by making it fairly expensive through various measures, as hinted in the introduction. Third, most importantly, Singapore's Land Transport Authority (LTA) actively aims to minimize road congestion through its Electronic Road Pricing System (ERP). Private taxicabs and cars using the main arteries of the road network are charged tolls that can add up to significant amounts for a single journey, particularly during peak morning hours.
The conditions obtained in Singapore's transportation landscape can be considered a natural experiment \citep{Dunning_NaturalExperimentsSocial_2012} as the absence of congestion offers a stark contrast to other densely populated metropolitan centers and it is possible only because of preventive measures such as significant road tolls. In what follows, we offer evidence on the Iniquity Theorem by studying the distribution of delays in a representative sample of students.


Singapore is an ideal setting to study the effect of tolls by using spatial data because the Singaporean state implements wide-ranging housing, urban development, and educational policies to prevent spatial segregation based on income inequality
\citep{Huat_PublicHousingResidents_2000, Chua_CommunitarianIdeologyDemocracy_1995, Phang_Affordablehomeownershippolicy_2010}. Sociological studies demonstrate that income inequality often leads to income segregation,%
\footnote{Income segregation is defined as ``the uneven geographic distribution of income groups within a certain area'' \citep[1093]{Reardon_IncomeInequalityIncome_2011}}
which in turn is associated with infrastructural and urban decay \citep{Sampson_MovingInequalityNeighborhood_2008, Jencks_SocialConsequencesGrowing_1990, Jargowsky_TakeMoneyRun_1996}. That is important because high levels of spatial segregation would complicate the causal relationship between tolls and delays. In Singapore, high-quality public housing projects across the city-state and evenly distributed infrastructure minimize the effects of such confounding factors.

The dataset we use comes from the National Science Experiment (NSE) \citep{wilhelm2016sensg,Monnot2016}, which is a large experiment deployed by SUTD in collaboration with several Singapore public agencies to understand the use of the city by students aged 12 to 20. Sensors were made available to be carried by the students over a period of 4 days. Every 13 seconds, whenever possible, the sensor logs its geographical location computed from scanning surrounding WiFi access points as well as several environmental factors such as relative temperature and humidity or noise level. More information on the NSE dataset can be found in Appendix \ref{sec:nse-app}.

The NSE dataset is a valuable source of information to study important questions pertaining to everyday life, and in particular transportation and congestion.  Here we combined the NSE dataset with a dataset of property prices%
\footnote{Obtained from the Singapore Real Estate Exchange (SRX) online platform.} %
to infer the connection between transportation and equality. As in previous studies with the NSE dataset \citep{monnot2017routing}, we restrict our attention to trips taken to go to school or university in the morning. The motivations are four-fold: first, students tend to take the most direct route to school in the morning, while the return trip will typically feature more stops in extraneous locations; second, within one category of schools, most students will start classes around the same time, making their trips comparable; third, students meaningfully interact with traffic as morning conditions present higher congestion during peak hours; fourth, almost all the major arteries and gateways incur ERP tolls in the morning hours.

A trip is split into segments, each segment corresponding to one distinct mode of transportation. The mode is recognized algorithmically between four options: walking, in a train, in a bus or in a car. It was shown that the mode recognition algorithm has a high level of accuracy \citep{Wilhelm2017}. Additionally, descriptive statistics of the dataset closely resemble survey data collected by the LTA: for instance, the share of public transportation users is around 60\% both in the survey and in our dataset.

The property price dataset covers a geographically large part of the city-state. We attribute each home a ``wealth index'' that reflects the average price per square foot of the surrounding area. We further divide this wealth index into 9 brackets. This is a reasonable strategy in Singapore, where most household wealth is invested in household's primary residence \citep{Edelstein_HousePricesWealth_2004,Abeysinghe_AggregateConsumptionPuzzle_2004}.%
\footnote{As \citet[571]{Abeysinghe_AggregateConsumptionPuzzle_2004} point out, ``even the most affordable public apartments in Singapore could cost 5–10 times the average annual household income.''} %
Although Singapore's income inequality is fairly high,%
\footnote{The Gini coefficient for Singapore after accounting for various transfers and taxes is 0.402. For further details, please see Key Household Income Trends for 2016, accessible at
\url{https://www.singstat.gov.sg/docs/default-source/default-document-library/publications/publications_and_papers/household_income_and_expenditure/pp-s23.pdf}.} %
most Singaporeans live in public housing that are scattered around the island.%
\footnote{Public housing prices vary by neighborhood, but this variation is much lower than the private segment of the housing market, where demand is driven by affluent Singaporean families, foreign investors, and Singapore's sizable expatriate population
\citep{Huat_PublicHousingResidents_2000,Phang_Affordablehomeownershippolicy_2010}.} %
We do not capture families with different incomes scattered around the city but the significant housing price variation across the island implies that the families we identify through residential address are highly likely to have different socio-economic status.
Our sample is divided into primary schools, secondary schools or junior colleges, and polytechnics or universities in roughly similar proportions. This sampling pattern and Singapore's successful polycentric urban development model \citep{Han_PolycentricUrbanDevelopment_2005} ensure that we have an even geographic distribution of schools across the island. 

\begin{figure}
	\centering
	\begin{minipage}[c]{0.35\linewidth}
		\centering
		\includegraphics[width=\textwidth]{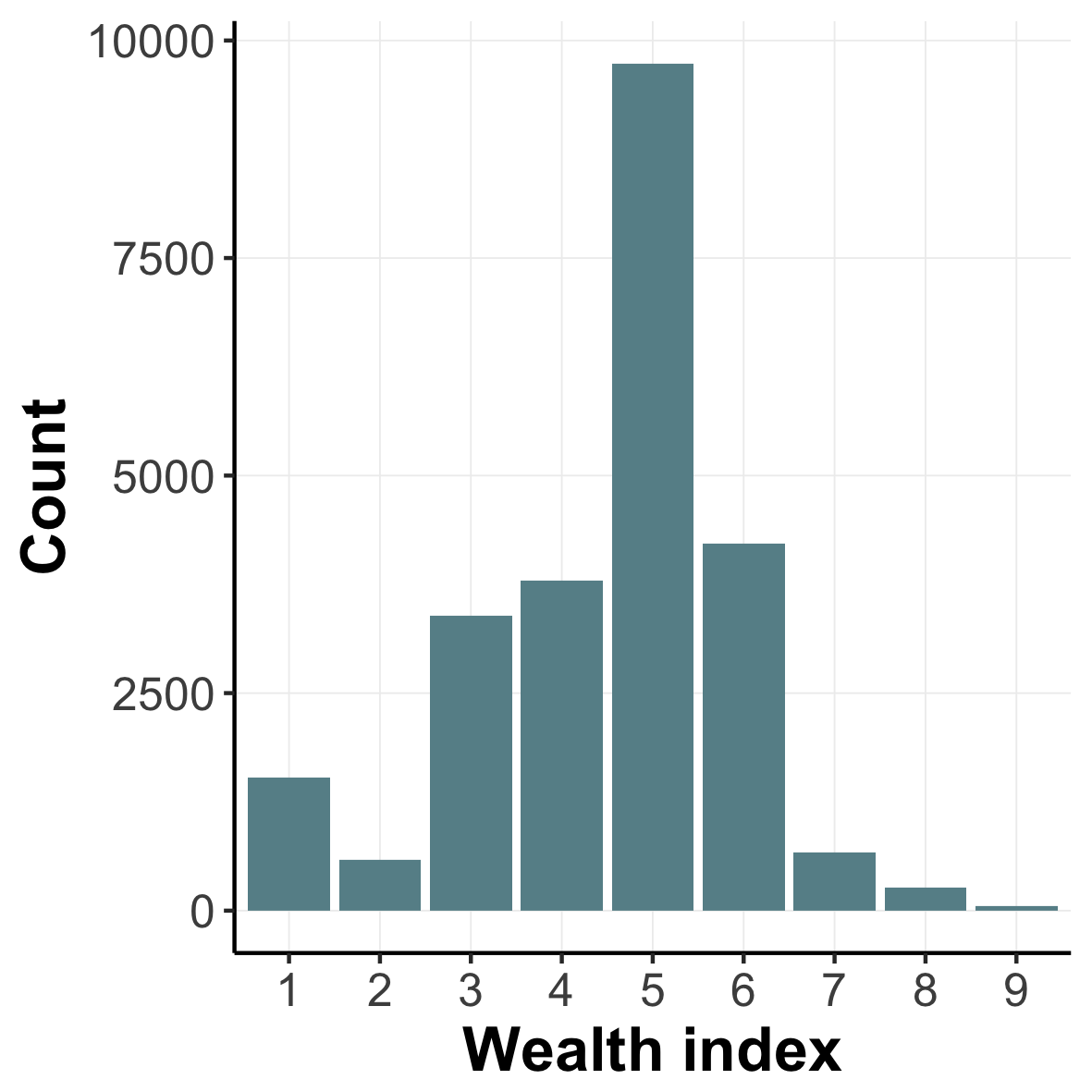}
		\caption{Population Per Wealth.}
		\label{fig:popwealth}
	\end{minipage}
	\begin{minipage}[c]{0.20\linewidth}
		\begin{tikzpicture}
		\end{tikzpicture}
	\end{minipage}
	\begin{minipage}[c]{0.35\linewidth}
		\centering
		\includegraphics[width=\textwidth]{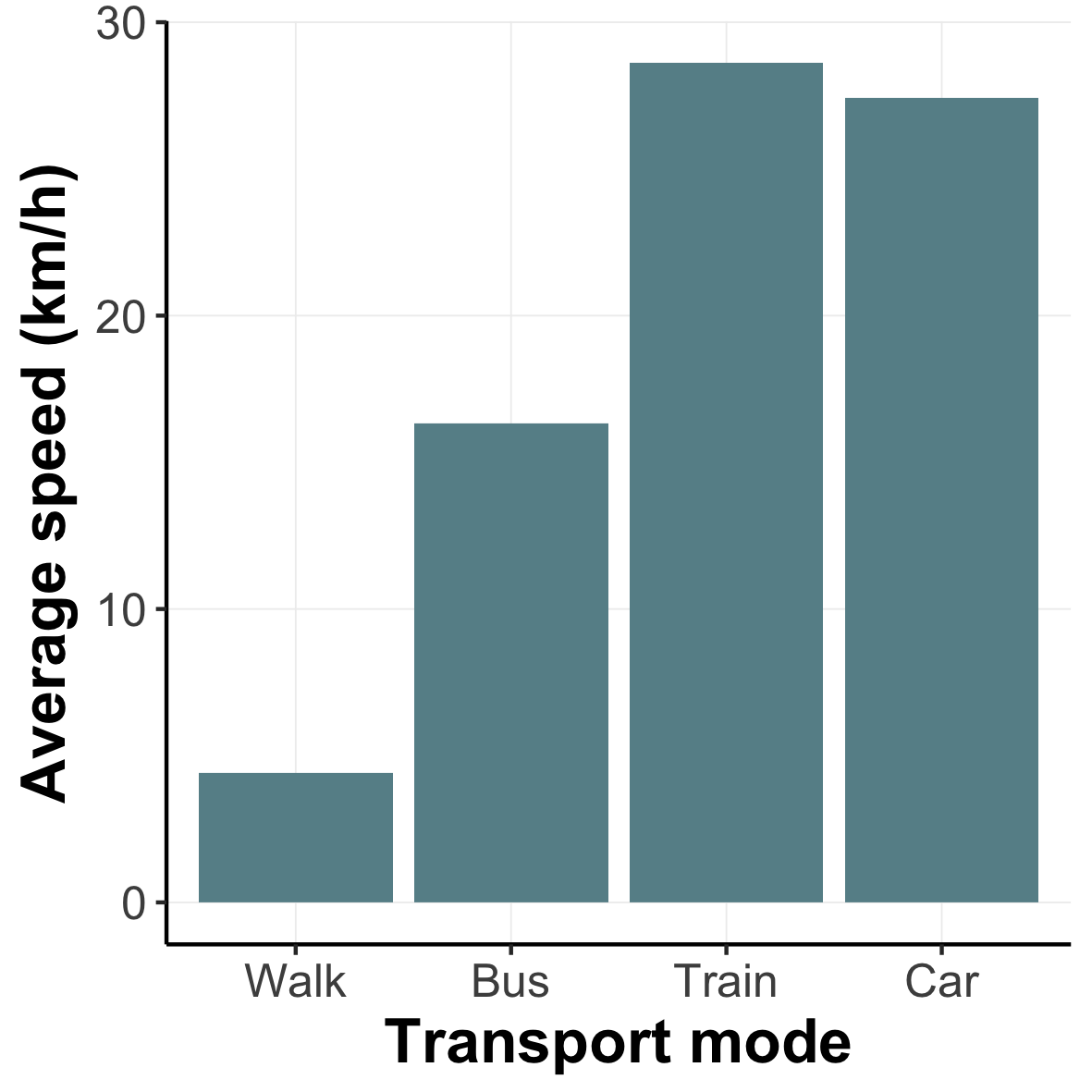}
		\caption{Average Speed Per Mode.}
		\label{fig:avgspeedpermode}
	\end{minipage}
\end{figure}

Based on residential location, the brackets 3, 4, 5, and 6 correspond to various middle-income neighborhoods, ranging from lower to upper-middle income groups. These brackets contain the great majority of our sample (Figure \ref{fig:popwealth}). Although there is some variation in the average distance traveled among these brackets, the differences are relatively small. The wealth brackets that markedly differ from the average of brackets 3--6 are the least and most affluent groups. This pattern follows from several factors. First, a large majority of Singapore's population lives in public housing, which are available to Singaporean residents even in the center and the belt surrounding the center of the city. The dense core and the surrounding periphery is where a majority of middle classes live, as can be seen in Figure \ref{fig:schperwealth} where red circles denote residences and yellow circles represent schools (circle size is proportional to student count). In these areas, families have access to a range of public and private schools, which has an equalizing effect on the average distance traveled. Second, the students from least affluent brackets are located predominantly in the north-west of the island, which is farther from the city core compared to the other residential areas we study (Figure \ref{fig:schperwealth}). Although this area contains its own share of educational institutions, students residing in the north-west of the island are likely to travel to other parts of the city to attend select schools.
Third, it is likely that students from the most affluent brackets (located in the center-south of the city) attend private international schools or elite secondary schools in numbers disproportionate to the rest of the population. Some of these elite schools are located in campuses far away from the center of the city, which increases the average distance traveled for these wealth brackets.




The differences in the means of distance and duration among the three income groups are tested by using the non-parametric Kruskal-Wallis test as well as one-way ANOVA. These tests support (\/$p < 0.001$ for both tests) the hypothesis that there are significant differences across groups when it comes to travel distance and duration. We observe two patterns when we compare average distance traveled (Figure \ref{fig:avgdistpersgm}) with average duration (Figure \ref{fig:avgdurpersgm}) across different wealth brackets. First, there is an increase in car usage as we move from the least affluent to most affluent wealth brackets. That increase is accompanied by a secular decrease in walking. As a result, as we move from wealth bracket 1 to 9, we observe students cutting down their travel time because of the efficiency of cars as opposed to walking (Figures \ref{fig:avgdurpersgm} and \ref{fig:avgspeedpermode}).


Second, as can be seen in Figure \ref{fig:avgdurpersgm}, although the students from the two most affluent wealth brackets travel, on average, a longer distance in the morning hours compared to brackets 3--6 (Figure \ref{fig:avgdistpersgm}), the time they spend on the roads is only slightly more than the majority of the students coming from brackets 3--6. The higher average travel speed and thus lower travel duration enjoyed by the most affluent wealth brackets is even more pronounced when they are compared to the two least affluent wealth brackets (Figures \ref{fig:avgdistpersgm}, \ref{fig:avgdurpersgm}, and \ref{fig:avgspeedpermode}). Since the NSE dataset enables us to break down each student trip into different segments by the transportation mode, we can demonstrate that the advantage in travel duration enjoyed by the most affluent brackets is a result of the higher usage of private cars and taxicabs, as these transportation means reduce the use of less efficient means of transportation such as buses and walking. That is possible only because the road network in Singapore is efficient and not prone to congestion, a remarkable feat for such a densely populated country.

Although these findings support the Iniquity Theorem, they also provide some policy lessons that might be applicable in other contexts. It should be remembered that Singapore provides an efficient public transportation system that reduces the inequality created by the drive for efficiency. Furthermore, the Singaporean polycentric urban development model \citep{Han_PolycentricUrbanDevelopment_2005} also seems to work, since such a model addresses some of the root causes that lead to congestion in the first place.

\paragraph{Estimation of the iniquity in Singapore}
In addition to the findings above that show the relationship between income distribution and transportation delays in an empirical setting, an estimated travel cost for each trip in the dataset is obtained. We can now estimate how the distribution of transportation delays affects, in turn, the income distribution in the same empirical setting. For students using private transportation, we compute the cost estimate with the applicable taxi fare in Singapore.%
\footnote{LTA, \url{https://www.lta.gov.sg/content/ltaweb/en/public-transport/taxis\%20and\%20private\%20hire\%20cars/fares-and-payment-methods.html}.}
For students using public transportation, we compute the cost estimate with the applicable ticket fare, which is an increasing function of the distance.%
\footnote{PTC, \url{https://www.ptc.gov.sg/regulation/bus-rail/fare-structure}.}

We can associate the wealth index with a numerical value of the price per square foot for a private-market condominium apartment in the neighborhood. The same could be done using the price of a public housing apartment as there is a strong correlation between the two prices. The Iniquity Theorem would hold in either scenario.

The price per square foot is used as the income distribution \( \ii \) and the Gini coefficient is obtained for this distribution.
Equation~\eqref{eq:cf2} with the canonical agent cost~\eqref{eq:canonical} gives the resulting income distribution \( \epii \), for a value of \( \alpha \) small enough that the obtained coefficient is positive.
Letting \( \alpha \) go to zero yields the iniquity \( \frac{dG}{d \alpha} \). If this iniquity is positive, the distribution is more unequal after the game than before. Numerical experiments with the data show that the iniquity is indeed positive, at a value of 2.58, and indicate that the Iniquity Theorem holds for real data collected in Singapore.

Future work may focus on obtaining a more precise estimation of the iniquity. Indeed, the Gini coefficient of the current income distribution \( \ii \), at 0.09, does not reflect the one of Singapore, closer to 0.45 before taxes. This is due in part to the low granularity of using property prices as a proxy for income, effectively ``bundling'' households together under one value. Adding the tolls to the cost estimation of the trip would also put us closer to the true price of using private transportation in Singapore, but is not likely to change the qualitative result of a positive iniquity.


\subsection{The National Science Experiment (NSE)}
\label{sec:nse-app}
The National Science Experiment is an initiative from the National Research Foundation (NRF), the Singapore University of Technology Design (SUTD) and partners to provide low-cost sensors to students in primary school, secondary school and junior colleges for up to four days. The sensors, called SENSg, log information at most every 13 seconds containing their GPS location --- inferred from scanning surrounding Wi-Fi access points --- as well as environmental factors such as relative temperature and humidity or noise levels. The sensors were given out in batches over a year, with a few thousand students participating in each batch.

As such, the NSE dataset is a rich description of individual user behaviour in Singapore, capturing the daily activity of about 50,000 students. Although the sample focuses on one particular class of the population in Singapore, previous studies \citep{monnot2017routing} were able to infer results on the congestion level as a whole.
Large scale data sources have been employed for previous studies in Singapore, however all with a limitation of some sort. For instance, public transport data was used in \citep{Sun2012, Lee2014, Holleczek2015, Poonawala2016} and does not capture the behaviour of citizens in private transport vehicles, or GSM cellular phone data in \citep{Holleczek2014} that is not granular enough to accurately detect the sequence of transport modes during a trip.

In the current study, our dataset is composed of 24,104 individual morning trips that were made throughout 2017 by 12,823 distinct students. The trips are a subset of a larger dataset, filtered using the following conditions:
\begin{itemize}
  \item The trip should be between 5 and 60 minutes to account for too short or too long trips that may be logged inaccurately by sensor noise, which accounts for about 8\% of the original dataset.
  \item The trip should have a minimum of 20 sensor logs throughout, to ensure that enough granularity is provided and transport modes are recognized appropriately.
  \item The trip should not be included in that of a school-chartered bus, which can be recognized algorithmically by identifying identical trips between groups of students.
\end{itemize}

\subsection{Efficiency and Equality: Ramifications for Distributive Justice and Related Questions}
\label{sec:justice-app}

The Iniquity Theorem underlines a trade-off between efficiency
and inequality. The normative implications of the {\em Iniquity
Theorem} are manifold. However, the ramifications of the {\em Iniquity
Theorem} are not limited to normative issues. Inequality has serious
social and thus serious economic consequences. Decades of research show
that the ``social consequences of economic inequality are sometimes
negative, sometimes neutral, but seldom \ldots{} positive''
\citep[64]{jencks_does_2002}. In what follows, we outline certain
ramifications of the Iniquity Theorem for distributive justice
and we highlight several issues for future research.

Since the formulation of general equilibrium theory
\citep{arrow_existence_1954, debreu_theory_1959}, it is a canonical
proposition in economics that under certain assumptions decentralized
economic interaction---the most important instance of which are
markets---are efficient \citep{beckert_market_2002}. For instance, the
famed argument of \citet{Coase_ProblemSocialCost_1960} is that
``efficiency is inevitable in a world of complete information and
unrestricted contracting'' \citep[244]{Ray_CoalitionFormation_2014}.
However, the set of assumptions that ensure efficiency are either strict
or remarkably abstract, particularly with respect to the market
structure and the behavior of economic agents. Not surprisingly, a
substantial and sophisticated body of studies in economics examine
inefficiencies created by the deviations from the benchmark case of
equilibrium under rationality, complete information, complete contracts,
and perfect competition
\citep{pissarides_search_2001, stigler_economics_1961, akerlof_market_1970, grossman_impossibility_1980, simon_models_1982, rubinstein_modeling_1998, leibenstein_allocative_1966, williamson_economics_1981}.
In this paper, we begin our analysis from a concrete economic
interaction setting to consider the case where efficiency is a result of
institutions (i.e., rules of the game) or government intervention.

That is not an entirely strange thread of analysis. After all, it was
accepted wisdom in the early 20th century that decentralized economic
interaction, and thus markets, were inefficient \citep{gemici_neoclassical_2015}. The most influential proponents of markets, no less than \citet{schumpeter_theory_1983} for
instance, defended markets and capitalism with an appeal to innovation
and growth, not efficiency. Furthermore, influential streams of research
show that economic activities are embedded in and coordinated by
networks of interpersonal and organizational ties
\citep{granovetter_economic_1985, powell_neither_1990, gemici_karl_2008, krippner_elusive_2001}. Despite a recent
interest in economic networks \citep{jackson_social_2008}, the
efficiency of markets when their network structure is taken into account
is a theoretically and empirically understudied phenomenon. Precisely
because of this lacuna, the results in this paper have important
ramifications. Namely, if decentralized economic interaction depends on
intervention for efficiency and the distributive impacts of an
intervention such as tolls are not negligible, the {\em Iniquity
Theorem} raises questions related to distributive justice and social
contract. Furthermore, the issues of distribution and justice become
even more pressing, given that they result from purposeful intervention.
It is a common refrain and an empirically supported proposition that
capitalist markets are prone to the Matthew effect---``unto everyone
that hath shall be given, and he shall have abundance: but from him that
hath not shall be taken away even that which he hath''
\citep[3]{merton_matthew_1968}---where cumulative advantage and
preferential attachment result in the accumulation of resources and
positions by a relatively minor number of individuals and organizations
\citep{merton_matthew_1988, rigney_matthew_2010, wade_causes_2004}.
However, as the results in this paper indicate, that might not be the
only mechanism that exacerbates inequality. Intervention for the sake of
efficiency, which has been a guiding principle of economic policy in
many parts of the world since the 1970s
\citep{centeno_arc_2012, fourcade-gourinchas_rebirth_2002}, might also
play a crucial role in greater inequality.

We outline two issues pertaining to the relation of distribution and the
price of anarchy. The first, {\em the opportunity cost of iniquity},
addresses how iniquity affects the distribution of opportunities in a
particular market or in society. Here, the central question is the
impact of iniquity created by an intervention such as tolls on the
principle of equal opportunity. The second, {\em the cooperation effect
of iniquity}, examines whether the exacerbated inequalities ultimately
benefit the least advantaged in society and how that, in turn, affects
social cooperation. Here, apart from the normative questions, the issue
is whether iniquity makes social cooperation less likely.

Both of these issues emerge from Rawls's influential theory of justice,
which grounds justice on procedural and distributive fairness
\citep{RawlsJusticeFairnessRestatement2001, MandleRawlsTheoryJustice2009, FreemanJusticeSocialContract2009}.
This theory posits that, at the most fundamental level, policies and
institutions in society should guarantee basic liberties and
entitlements such as freedom of thought and the right to own property.
Once basic liberties and rights are secured, justice as fairness should
be based on (1) the fair equality of opportunity, and (2) the difference
principle. The fair equality of opportunity encapsulates the notion that
ascriptive factors such as socioeconomic background should not
constitute barriers to the attainment of positions in society. The
difference principle crystallizes the idea that social and economic
inequalities are justified to the extent that they are ``to the greatest
benefit of the least advantaged'' \citep[266]{RawlsTheoryJustice1999}.
It should be observed that even when it is unfeasible to achieve these
ideals, Rawls's two principles provide a benchmark to assess the extent
to which inequalities created by purposeful intervention are desirable.

The absence of equal opportunity means that some members of society will not
realize their full potential; at a minimum, it leads to a waste of human
resources and thus a sub-optimal use of one of the essential factors of
production. Hence, the absence of equal opportunity entails a cost,
which can be specified as the outcome, say in terms of innovation and
growth, that would materialize had the inequality-engendering
intervention not occurred
\citep{Banerjee_OccupationalChoiceProcess_1993}. We call this {\em the
opportunity cost of iniquity}. It should be noted that, although the
existing empirical evidence presents a complex picture, there is
sufficient support on the negative relationship between inequality and
growth, particularly when data from recent years are taken into account
\citep{Benhabib_TradeoffInequalityGrowth_2003, Cingano_TrendsIncomeInequality_2014}.
The relationship between innovation and inequality is murkier. Although
the existing empirical evidence lends support to the proposition that a
rapid period of innovation fosters inequality
\citep{Aghion_InnovationTopIncome_2015}, the opposite relationship is
difficult to decipher.

{\em The opportunity cost of iniquity} singles out the cost of iniquity
as compared to alternative trajectories that would occur under more
equitable distribution of opportunities and resources. The
{\em cooperation effect of iniquity} draws attention to the issue of
how greater inequality affects social cooperation. Rawls, following
Kantian contractual reasoning, suggests that fairness itself is a viable
and stable basis of social cooperation
\citep{ONeill_KantSocialContract_2012, Rawls_KantianConstructivismMoral_1980}.
The gist of his argument is that in a hypothetical world, a world where
individuals do not have certainty on their social positions and
distributional shares, an institutional rule such as ensuring that the
least advantaged receives the greatest benefit from a particular policy
is also the rational strategy for the members of society, simply because
it is what prudence requires under high uncertainty. To be sure, Rawls's
argument is an abstract thought experiment. However, it captures an
important issue. Namely, rules and policies that violate the principle
of fairness and entail greater inequality are also the rules and
policies that are likely to erode the bases of cooperation. We highlight
two analytically distinct aspects of this problem as potential avenues
for further inquiry:

\begin{itemize}
    \item {\em How does iniquity affect cooperation among members of society?} A voluminous and well-established literature examines the emergence and evolution of cooperation, often using a game-theoretic framework such as Iterated Prisoner's Dilemma \citep{Axelrod_FurtherEvolutionCooperation_1988, Axelrod_EvolutionCooperation_1981, Riolo_EvolutionCooperationReciprocity_2001, Bowles_CooperativeSpeciesHuman_2011}. Yet the effect of inequality on cooperation is a relatively novel area of research. The existing theoretical and empirical studies are in line with a Rawlsian emphasis on fairness as an important determinant of cooperation \citep{Fehr_ReciprocalFairnessCooperation_2002, Fehr_TheoryFairnessCompetition_1999, Colasante_ImpactInequalityCooperation_2014, Fehr_EconomicsFairnessReciprocity_2006}. However, given the recent theoretical results that show the complexity of cooperative behavior in a population when strategies and payoffs coevolve and when players have long memories \citep{Press_IteratedPrisonerDilemma_2012, Stewart_Collapsecooperationevolving_2014, Stewart_Smallgroupslong_2016}, further research is required to understand the evolution of cooperation under a trade-off between efficiency and equality.
    \item {\em How does iniquity affect the formation of groups and thus cooperation between different groups?} Apart from the general problem of cooperation, the trade-off between efficiency and equality also brings cooperation among different groups into focus. To use the language of cooperative game theory and political science, what is at stake is both coalition formation and stability of cooperation among coalitions once the trade-off between equality and efficiency are taken into account. It is true that the notion of fairness has been a fundamental element in cooperative game theory since the foundational works of the early 1950s \citep{Shapley_ValueNPersonGames_1953, Ray_CoalitionFormation_2014}. However, an explicit consideration of the trade-off between efficiency and fairness can offer fresh perspectives on the nature of cooperation among groups. 
\end{itemize}

\end{document}